\def\BibTeX{{\rm B\kern-.05em{\sc i\kern-.025em b}\kern-.08em
    T\kern-.1667em\lower.7ex\hbox{E}\kern-.125emX}}
\newtheorem{proposition}{Proposition}[section]
\def\psfancypar#1#2{\begingroup\def\par{\endgraf\endgroup\lineskiplimit=0pt}
               \setbox2=\hbox{\large\sc #2}
               \newdimen\tmpht \tmpht \ht2 \advance\tmpht by \baselineskip
               \font\hhuge=cmti12 at \tmpht
              \setbox1=\hbox{{\hhuge #1}}
               \count7=\tmpht \count8=\ht1
               \divide\count8 by 1000 \divide\count7 by \count8
               \tmpht=.001\tmpht\multiply\tmpht by \count7
               \font\hhuge=cmbx10 at \tmpht
               \setbox1=\hbox{{\hhuge #1}}
               \noindent
                \hangindent1.05\wd1
               \hangafter=-2 {\hskip-\hangindent
               \lower1\ht1\hbox{\raise1.0\ht2\copy1}%
                \kern-0\wd1}\copy2\lineskiplimit=-1000pt}
\newcommand{\boxit}[1]{\vbox{\hrule\hbox{\vrule\kern6pt
\vbox{\kern6pt#1 \vspace{-10 pt}\kern6pt}\kern6pt\vrule}\hrule}}
\newcommand{\boxitsl}[1]{\vbox{\hrule\hbox{\vrule\kern6pt
\vbox{\kern6pt#1 \vspace{3 pt}\kern6pt}\kern6pt\vrule}\hrule}}
\def\thetabf{{\mbox{\boldmath$\theta$\unboldmath}}}
\def\etabf{\mbox{\boldmath$\eta$\unboldmath}}
\newcommand{\E}{\mbox{{\rm E}}}
\def\reals{ { {\rm  I \kern-0.15em R }  } }
\def\complex{\hbox{ \,{{\rm C} \kern-0.50em \raise0.20ex {  |}}\,}}
\def\Pibf{\hbox{$\boldsymbol{\Pi}$}}
\def\abf{\hbox{\bf a}}
\def\bbf{\hbox{\bf b}}
\def\cbf{\hbox{\bf c}}
\def\ebf{\hbox{\bf e}}
\def\fbf{\hbox{\bf f}}
\def\gbf{\hbox{\bf g}}
\def\hbf{\hbox{\bf h}}
\def\nbf{\hbox{\bf n}}
\def\rbf{\hbox{\bf r}}
\def\sbf{\hbox{\bf s}}
\def\ubf{\hbox{\bf u}}
\def\vbf{\hbox{\bf v}}
\def\xbf{\hbox{\bf x}}
\def\ybf{\hbox{\bf y}}
\def\zbf{\hbox{\bf z}}
\def\rbf{\hbox{\bf r}}
\def\xbf{\hbox{\bf x}}
\def\ybf{\hbox{\bf y}}
\def\Abf{\hbox{\bf A}}
\def\Bbf{\hbox{\bf B}}
\def\Cbf{\hbox{\bf C}}
\def\Dbf{\hbox{\bf D}}
\def\Hbf{\hbox{\bf H}}
\def\Ibf{\hbox{\bf I}}
\def\Jbf{\hbox{\bf J}}
\def\Nbf{\hbox{\bf N}}
\def\Pbf{\hbox{\bf P}}
\def\Qbf{\hbox{\bf Q}}
\def\Rbf{\hbox{\bf R}}
\def\Sbf{\hbox{\bf S}}
\def\Wbf{\hbox{\bf W}}
\def\Xbf{\hbox{\bf X}}
\def\Zbf{\hbox{\bf Z}}
\def\be{\vskip .3cm \begin{equation}}
\def\ee{\end{equation} \vskip .4cm \noindent}
\newcounter{remarknr}
\renewcommand{\theremarknr}{\arabic{remarknr}}
\newenvironment{remark}{\vskip\baselineskip
\refstepcounter{remarknr}\noindent{\bf
Remark~\theremarknr}\hskip .8em}{\ \hfill $\Box$ \vskip\baselineskip}
\newtheorem{theorem}{Theorem}
\newcounter{assumpnr}
\renewcommand{\theassumpnr}{\arabic{assumpnr}}
\newcounter{examplenr} 
\renewcommand{\theexamplenr}{\arabic{examplenr}}
\renewcommand{\E}{\mathbb{E}}
\begin{document}

\title{\large 
Statistical Analysis of the Extensive Cancellation Algorithm for Passive Radar Using an Imperfect Reference Signal
\thanks{The research was funded by the strategic innovation programme “Smartare Elektroniksystem”, a joint research project financed by VINNOVA, Formas and the Swedish Energy Agency, and by SAAB.}
}

\author{\IEEEauthorblockN{Mats Viberg\thanks{Mats Viberg is also with Chalmers University of Technology.}}
\IEEEauthorblockA{
\textit{Blekinge Institute of Technology}\\
Karlskrona, Sweden 
}
\and
\IEEEauthorblockN{Daniele Gerosa}
\IEEEauthorblockA{
\textit{Chalmers University of Technology}\\
Göteborg, Sweden 
}
\and
\IEEEauthorblockN{Tomas McKelvey}
\IEEEauthorblockA{
\textit{Chalmers University of Technology}\\
Göteborg, Sweden 
}
\and
\IEEEauthorblockN{Thomas Eriksson}
\IEEEauthorblockA{
\textit{Chalmers University of Technology}\\
Göteborg, Sweden 
}
}

\maketitle

\begin{abstract}

Passive radar systems have received tremendous attention over the past few decades, due to their low cost and ability to remain covert during operation. Such systems 
rely on a so-called Illuminator-of-Opportunity (IO), for example, a commercial TV station. We consider a network of Receiving Nodes (RN) without spatial resolution capability, which receives the direct signal and reflections from both stationary objects (clutter) and possible targets. After suitable preprocessing, the RNs transmit information to a Fusion Center (FC) that performs the final target detection, localization and tracking.

Several methods for target localization have been proposed in the literature, and our
focus is on the seminal Extensive Cancellation Algorithm (ECA)
\cite{colone_multistage_2009}. In this approach,
each RN collects information about target parameters, while
canceling interference using a projection.
This is done by exploiting a separate Reference Channel (RC), which captures the IO signal without interference apart from receiver noise.

We derive the statistical properties of the ECA parameter estimates under the assumption of a high Signal-to-Noise Ratio (SNR), and we give a sufficient condition for the SNR in the RC to enable statistically efficient estimates. The theoretical results are corroborated through computer simulations, which indicate that the theory agrees well with empirical results
under practical operating conditions.
The contributions of this paper can be used, for example, to design experimental setups for feasibility studies and to inform system design for achieving a desired localization accuracy.

\end{abstract}



\section{Introduction}
\label{sec:intr}

The idea of employing illuminators of opportunity (IOs) like frequency modulation (FM) radio or television stations to perform radar tasks is not new, and it goes back at the very least to the successful experiment carried out by sir Robert Watson-Watt in \(1935\), who used the shortwave BBC Empire transmitter at Daventry (England) to detect a slow bomber aircraft at short distance \cite{Kuschel_NATO_2017}. 
Today, passive radar is a fairly well established technology \cite{kuschel_tutorial_2019,Colone_etal_tutorial_MAES2023}, and it 
has found plentiful use in both military and civilian applications, see e.g.
\cite{Kuschel_NATO_2017,Micro_Ukr_2019,OlsenKuschel:MAES2017}.
There are two major differences between passive and active radars. Firstly, and most importantly, in passive radar systems the transmitted waveform is not known at the Receiving Nodes (RNs). Secondly, the waveform is not designed for radar applications, but more commonly for broadcasting information. It can therefore be viewed as a random signal rather than a periodically repeated waveform, which is typical of active radars. Since the waveform is unknown, each RN needs to receive the transmitted waveform through a line-of-sight channel termed the \emph{direct path}. If only a single receiving channel is available, this can be used to calculate the Time-Difference-of-Arrival (TDOA) between the direct path and potential targets.
It is well-known that noise in the reference signal (the direct path) can severely deteriorate the detection capability and interference rejection, see e.g. \cite{liu_hongbing_himed_2014,Mahfoudia_etal_RefSigReconstruct:RadCon2017,Zhou_etal:DirectLocalizationReferenceChannelFreePassiveRadar_TSP2025}. 
Therefore, most practical systems are equipped with a separate Reference Channel (RC) with a high-gain antenna directed towards the IO. 
The RC is usually assumed to have negligible contributions from clutter and target reflections, and this is also the case considered herein. In effect, the RC serves as a reference for extracting information from the Surveillance Channel (SC), which receives the target signal corrupted by direct path and clutter interference. 

In the present work, we are mainly concerned with the problem of estimating the target parameters in the presence of interference and using an imperfect reference channel.
Several algorithms have been proposed in the literature for solving this problem, including
\cite{colone_multistage_2009,PalmerSearle_Adaptive_Filter_IEEE_Radar_Conf_2012,Ma_etal_SparseAdaptFiltNLMS_GRSLett2016,bai_etal_2021,LyuDing:_IET2022,Nouar_etal_Piers:2023,zhou_direct_2024,Guo_etal_ClutterSuppress_DSP:2025}.
A recent contribution, \cite{Zhou_etal_HybridActPassRadar_TAES2025} considers the interesting possibility to combine the active and passive modes.
Target parameter estimates
may later be leveraged in a Generalized Likelihood Ratio Test (GLRT) for solving the detection problem, see e.g. \cite{thomopoulos_optimal_1989,tajer_optimal_2010,yi_suboptimal_2020}.

Despite the rich efforts regarding estimation methods and their efficient implementation (e.g. \cite{Colone_etal_SlidingECA_TAES2016,Tang_etal_FastECA_TAES2025}), the quality of the resulting parameter estimates and its dependence on various system parameters is not well understood. In particular, the effect of noise in the reference signal needs to be investigated in more detail, since it leads to both imperfect interference cancellation and mismatched filtering.
Further, the question of identifiability of the target parameters at each RN has also not been addressed for a general waveform.
The purpose of this paper is to fill these gaps by 1) deriving an explicit formula for the
Mean Squared Error (MSE) of the target parameter estimates under the above mentioned scenario, and 2) proving that the target parameters can be uniquely recovered  
for IO signals belonging to a certain class of stochastic processes.

For a perfectly known reference signal, the target parameter estimates are expected to attain the Cramér-Rao Lower Bound (CRLB) on the MSE, derived for such a favorable case
\cite{he_noncoherent_2010,greco_cramer-rao_2011,Liu_etal_Survey_Limits_ISAC:2022}. 
In \cite{tong_cramerrao_2019}, the CRLB is derived for the case where an RC is unavailable and the reference signal is estimated jointly with the target parameters, while \cite{Zhou_etal_HybridActPassRadar_TAES2025} considers hybrid active/passive radar systems. Neither of these include the effect of clutter interference, though.
Although the referenced works can shed light on how localization accuracy depends on the scenario parameters, they do not consider the (theoretical) performance of an actual estimator. Our statistical analysis extends the corresponding results for ``classical" radar and array signal processing, such as \cite{StoicaN:89},
by considering widely separated antennas and an uncertain reference signal. A somewhat related performance analysis is found in \cite{TirerWeiss_TSP2017}, where Direct Position Determination (DPD, see below) based on time-delay data only is studied, assuming antenna arrays at each RN. The study perhaps closest to the one presented here is
\cite{wang_statistical_2017}, which analyzes DPD based on single sensor Doppler-shift measurements without interference. The referenced work includes the effect of modeling errors, which makes the analysis somewhat related to ours. However, the estimator in \cite{wang_statistical_2017} has a different structure, and the question of uniqueness is not posed.

In the present paper, we focus on the Extensive Cancellation Algorithm (ECA) \cite{colone_multistage_2009}, as a state-of-the-art algorithm for the studied scenario in terms of statistical accuracy. In this method, the interference from the direct path and clutter is modeled by a linear Finite Impulse Response (FIR) filter, different at each RN. Thus, the interference resides in a subspace that corresponds to stationary reflections from a set of ``contaminated" range gates. The ECA algorithm first applies a projection operator 
to the SC data to remove the interference.
Subsequently, a matched-filter is applied to identify the target range (time-delay) and radial speed (Doppler) parameters. 
We present the ECA algorithm in the framework of Maximum Likelihood (ML) estimation, and show that it can be interpreted as an approximate ML estimator.
The approximation consists of using the data from the RC channel as if it were the true transmitted waveform.
This is also in line with \cite{zhou_direct_2024}, where, however, clutter in the SC is not considered.
Since the data at each RN is independent, the global likelihood function is calculated at the FC, simply by adding the contributions (log-likelihood) from each node. This is now viewed as a function of the target position and velocity, and its maximum with respect to these parameters yields the final estimate. This procedure is often referred to as Direct Position Determination (DPD) \cite{WeissDPD_Book2009,BarShalomWeiss_SP2011}. While it requires significant communication overhead and solving a complicated optimization problem, DPD may yield superior performance as compared to fusing only local estimates from the RNs, as in \cite{MalanowskiMK:2012,Zhang_etal_Localization_TSP2025}. 
Yet, as pointed out in \cite{wang_statistical_2017},
the latter two-step approach can achieve the same asymptotic performance as DPD, by exploiting the Extended Invariance Principle (EXIP) of \cite{stoica_reparametrization_1989}. This requires, though, that the SNR is high enough at all RNs, while DPD only needs the "total SNR" to be high. See also \cite{Zhou_etal_HybridActPassRadar_TAES2025}, which includes a comparison of the two approaches.

The purpose of the present contribution is to study the estimation accuracy rather than to propose a new algorithm, and we focus on the ECA-based DPD approach as described above.
The novel contributions are summarized as follows:
\begin{enumerate}
    \item An interpretation of the Extensive Cancellation Algorithm (ECA) of \cite{colone_multistage_2009} as an approximate Maximum Likelihood technique.
    \item A proof of parameter identifiability and consistency of the parameter estimates for a class of stochastic processes. 
    \item A high-SNR analysis of the ECA estimates of the delay/Doppler parameters and/or the target location and velocity. An explicit expression of the covariance matrix of the estimated target parameters is given, and it is interpreted as the sum of the CRLB 
    assuming a perfect knowledge of the transmitted waveform and the excess contribution due to noise in the reference channels.
    \item A compact expression for the CRLB that involves only the parameters of interest (delay and Doppler), and not other unknowns related to amplitude or direct path and clutter interference.
    \item Based on the above, explicit guidelines for the SNR in the RC in order for the parameter estimates to achieve the mentioned CRLB. This is useful to set requirements on, for example, the antenna gain and the receiver electronics such that noise in the RC does not significantly deteriorate the estimation of the target parameters.
\end{enumerate}
We remark that the results on parameter identifiability and the compact CRLB formulation is applicable also to the case where the transmitted waveform is perfectly known, referred to as \emph{noise radar} \cite{Savci_etal_NoiseRadarOverview_MAES2020,galati_signal_2022,AnkeletalNoiseRadarIET2023}.

\subsection{Symbols and notation}
We indicate \emph{column} vectors with bold lowercase letters \( \mathbf{a}, \mathbf{b}, \dots \) and matrices with bold uppercase letters \( \mathbf{A}, \mathbf{B}, \dots \). The element-wise Hadamard product is denoted by \( \odot \) and the norm \( \| \cdot \| \) is the standard Euclidean norm unless otherwise specified.

\section{System Model and Problem Formulation}
\label{sec:prob}

We consider a passive radar scenario, where $K$ stationary RNs collect data emanating from $N_I$ IOs. 
See Figure \ref{fig:geometry} for a scenario with one IO and one RC node.
\begin{figure}
\includegraphics[width=0.5\textwidth]{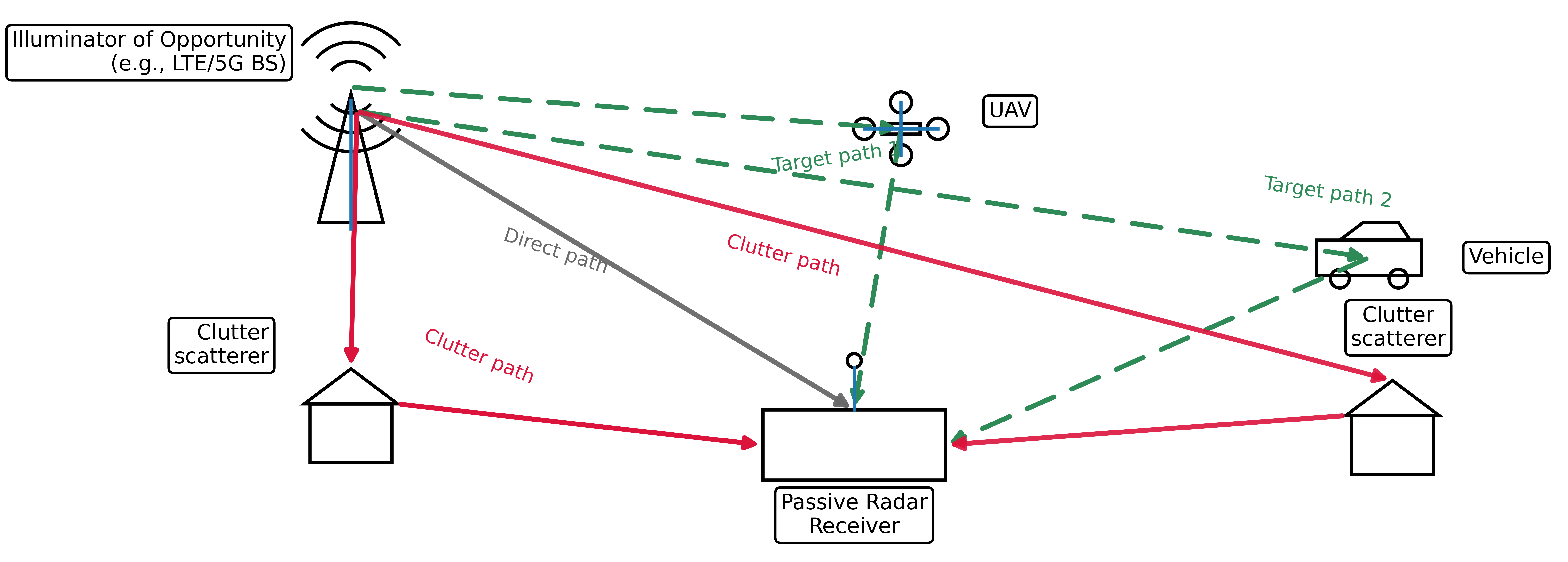}
\caption{General Passive Radar Geometry}
\label{fig:geometry}
\end{figure}
The nodes receive a direct path from the IO, unwanted reflections from stationary objects (clutter) as well as reflections from $N_T$ moving targets.
Each RN is equipped with two antennas; the reference channel (RC) and the Surveillance Channel (SC), respectively. The RC has a directive antenna that captures the direct path at a high SNR,
whereas the SC antenna senses the entire area where targets can appear, thus also receiving both Direct Path Interference (DPI) and Clutter Interference (CI). Secondary reflections are ignored, and the same for clutter and target reflections into the RC.
For simplicity, we assume $N_I=1$, i.e. only one IO is considered. 
Likewise, we consider only the case $N_T=1$, although the studied method can in principle be applied to multiple sufficiently well-separated targets by searching for all local maxima of the global likelihood function to be defined. Indeed, such a procedure is standard in most radar systems to cope with multi-target scenarios, and a separate hypothesis test is then performed at each peak to decide whether a target is present or not. In the present contribution we do not address the detection problem, and assume knowledge of the presence of exactly one target.

In the studied scenario, the following assumptions are made (similar to \cite{colone_multistage_2009}):
\begin{itemize}
    \item The target of interest is moving at a constant non-zero velocity and the data collection time is short enough for the scenario parameters to remain approximately constant.
    \item The interference contributions to the
    SC from clutter and the direct path are stationary, denoted CI and DPI respectively.
    \item The RC uses a directive antenna, so that the clutter and target contributions to the RC can be ignored in relation to the direct-path IO signal.
    \item The RNs can transmit data to a Fusion Center (FC) after appropriate pre-processing. The data from different receiver nodes are synchronized to time-delay (relative bandwidth) but not to phase (relative carrier).
    \item The positions of the IO and the RNs are known at the FC. For simplicity, we consider a 2D geometry with spatial coordinates $x$ and $y$. The extension to 3D comes with a higher computational demand, but is otherwise straightforward.
    \item The RC and SC receivers at each node are considered co-located and they share the same oscillator and thus experience the same phase noise, which is therefore neglected. 
\end{itemize}

The IO transmits a signal $\Re\{s(t)e^{j\omega_{c}t}\}$, where $s(t)$ is the baseband complex envelope and $\omega_c$ the carrier frequency. 
Given the description and assumptions in the previous section, the available data after demodulation at node $k,\ k=1,\dots,K$, is modeled by
\begin{align}
    x_{k}(t) &= a_{k} s_k(t) + n_{k}(t) \label{eq:ref} \\
    y_{k}(t) &= b_{k} s_k(t) +  
    s_k^c(t)
    + d_{k} s_k(t-\tau_{k}) e^{j\omega_k t} + e_{k}(t)\,, \label{eq:surv}
\end{align}
where $x_k(t)$ and $y_k(t)$ represent the data from the reference and the surveillance channel respectively.
For ease of notation, we define the time reference for RN $k$ at the node itself, so that
$s_k(t)$ is the delayed IO waveform at node $k$. The delay is known at the FC, since the position of the IO and the nodes are assumed known.
The RC signal $x_k(t)$ in
(\ref{eq:ref}) consists of a direct path, where $a_{k}$ accounts for propagation attenuation and receiver characteristics, and receiver noise $n_{k}(t)$. In the SC (\ref{eq:surv}), the first term is the DPI, the second represents CI,
the third is the target reflection, and the last term is receiver noise. The complex target amplitude $d_k$ accounts for propagation attenuation, bi-static Radar Cross Section (RCS) as well as antenna and receiver gain. The factor $e^{j\omega_k t}$ represents the Doppler effect due to the 
bi-static range rate, as seen from RN $k$ (see below). 
The constant target velocity is assumed to be
sufficiently low so that the geometry of the bi-static scenario does not change significantly over the data collection interval. From Figure \ref{fig:geometry} (see also, e.g. \cite{DuW:2014}), the time-varying time-delay of the target signal can be expressed as
\begin{equation}
\label{eq:tau}
\tau_k(t) = \frac{\|\ubf - \rbf_k\| + \|\ubf - \rbf\|}{c} - \frac{\|\rbf_k - \rbf\|}{c} \, ,
\end{equation}
where $\ubf=(x,y)^T$ is the (time-varying) target position vector, $\rbf_k$ is the location of the $k$th RN, $\rbf$ is the IO position, and $c$ is the speed of propagation. Notice that the line-of-sight delay has been subtracted in \eqref{eq:tau} to comply with the model in \eqref{eq:surv}. The distances from target to RN $k$ and from target to IO are, respectively, denoted as
$r_k = \|\ubf - \rbf_k\|$ and $r = \|\ubf - \rbf\|$. The time derivative of $\tau_k(t)$ due to a constant velocity $\dot{\ubf}=(v_x,v_y)^T$ is given by
\begin{equation}
\label{eq:taudot}
\dot{\tau}_k = \frac{1}{c} \left(
\frac{\dot{\ubf}^T(\ubf - \rbf_k)}{r_k} +
\frac{\dot{\ubf}^T(\ubf - \rbf)}{r}
\right)\, ,
\end{equation}
and is assumed to be constant.
We denote the delay of the target at time $t=0$ as
$\tau_k(0)=\tau_k$, so that
the time-varying time-delay is given by
$
    \tau_k(t) = \tau_k + \dot{\tau}_k t
$. The Doppler shift comes from the time-delay of the carrier and we have $e^{j\omega_{c}(t-\tau_{k}(t))}=e^{j\omega_{c}t}e^{-j\tau_{k}}e^{-j\omega_{c}\dot{\tau}_k t}$, and thus the target Doppler frequency $\omega_k$ is related to $\dot{\tau}_k$ by 
\begin{equation}
    \omega_k = -\dot{\tau}_k \omega_c\, .
    \label{eq:Doppler}
\end{equation}
It is clear that each RN can only determine the time-delay and Doppler of the target as seen from its own position. Given $(\tau_k,\omega_k)$ pairs from sufficiently many RNs, there is a unique target position and velocity vector that satisfies (\ref{eq:tau}) -- (\ref{eq:Doppler}) for all $k$. The general problem under consideration herein is to estimate $\ubf$ and $\dot{\ubf}$ from measured data at the RC and SC at $K$ nodes. 

Let the data collection time in the SC be $0\leq t < T$, in which $N$ samples are collected at time instances $t_n = n\Delta T$, $\Delta T = T/N$, $n=0,\dots,N-1$. 
The sampling time $\Delta T$ satisfies $\Delta T \leq 1/B$, where $B$ is the bandwidth of $s_k(t)$.
If $\dot{\tau}_k$ is much smaller than $\Delta T/T=1/N$, the time-delay in $s(t-\tau_k(t))\approx s(t-\tau_{k})$ is approximately the same for all samples. If this is not the case, the so-called \textit{range cell migration} may occur, so that the discrete-time delay (corresponding to bistatic target range gates) is not constant during the data collection interval. See, e.g. \cite{barott_single-antenna_2014,Martelli_etal:2018} for details and mitigation for the monostatic case. 
For simplicity, we assume here that \eqref{eq:surv} holds for $0\leq t \leq T$, where $\tau_k$ is constant. It is possible to extend the theoretical results to a general case where $\tau_k$ is time-varying, but to keep the exposition comprehensive we do not pursue this any further.
Due to the time-delay of the clutter and target components, it is assumed that the data collection starts earlier for the RC than for the SC. Let the maximum delay of interest be $M$ samples. 
The available data samples are then
$\{x_{k}(t_n)\}_{n=-M}^{N-1}$ and
$\{y_{k}(t_n)\}_{n=0}^{N-1}$, which
can be put into vector form as
\begin{align}
    \xbf_{k}^M &= a_{k} \sbf_k^{M} + \nbf_{k} \label{eq:vref} \\
    \ybf_{k} &= b_{k} \sbf_k + 
    \sbf_k^c +
    d_{k}\, \sbf_k(\tau_{k}) \odot \vbf(\omega_k) + \ebf_{k}\,, \label{eq:vsurv}
\end{align}
where $\xbf_{k}^M = 
[x_{k}(t_{-M}),\dots,x_{k}(t_{N-1})]^T$ is the observed RC signal vector and 
$\sbf_{k}^{M}$ is the vector of received waveforms at the same sample instants,
whereas $\ybf_{k} = 
[y_{k}(t_{0}),\dots,y_{k}(t_{N-1})]^T$ and 
$\sbf_{k}(\tau_k) = 
[s_{k}(t_0-\tau_k),\dots,s_{k}(t_{N-1}-\tau_k)]^T$ are the corresponding SC signals, with the simplified notation $\sbf_k=\sbf_k(0)$ for the DPI waveform, which has zero delay by definition. 
Further, for the target signal, $\vbf(\omega_k)$ denotes the DFT vector
\begin{equation} \begin{split}
    \vbf(\omega_k) & = [e^{j\omega_k t_0}, \dots,e^{j\omega_k t_{N-1}}]^T \\ & 
    =
    [1,e^{j\omega_k \Delta T},  \dots,e^{j\omega_k (N-1) \Delta T}]^T.
    \end{split}
    \label{eq:vdef}
\end{equation}
The clutter component is assumed to be well-modeled as a linear combination of a finite set of past samples of the IO signal, i.e. a FIR filter:
$$
s_k^c(t_n) = \sum\limits_{l\in \mathcal{L}_c}
c_l s_k(t_n - l\Delta T)
=\sum\limits_{l=1}^L
c_l s_k(t_n - l\Delta T)\,,
$$
where the set $\mathcal{L}_c$, of cardinality $L$, $L\leq M$, contains the range of time delays $l$ for which we expect clutter returns. For the sake of simplicity, we assume in the second equality that $\mathcal{L}_c=\{1,\dots,L\}$.
Hence, we can express the clutter vector in (\ref{eq:vsurv}) as
\begin{equation}
    \sbf_k^c = \Sbf_k \cbf_k\,,
\end{equation}
where $\Sbf_k$ is an $N\times L$ Toeplitz matrix containing samples of $s_k(t_n - l\Delta T)$ for 
$l=1$ and $n=0,\dots,N-1$ as its first column, and for $n=0$ and $l=1,\dots,L$ as its first row; and where $\cbf_k = [c_1 ^{(k)},\dots,c_L ^{(k)}]^T$ is the vector of complex-valued FIR filter coefficients. 
The noise contributions are modeled as Additive White Gaussian Noise (AWGN) 
with variances $\sigma_n^2$ and $\sigma_e^2$ respectively. Thus, $\nbf_{k}$ and $\ebf_{k}$ are zero-mean independent Gaussian random vectors with covariance matrices $\sigma_n^2\Ibf_{M+N}^{}$ and $\sigma_e^2\Ibf_N^{}$ respectively, where the dimension of the respective identity matrix has been stressed for clarity.

\section{The ECA Approach to Target Parameter Estimation and Localization}
\label{sec:method}

In this section, we present the ECA approach of \cite{colone_multistage_2009} within a Maximum Likelihood (ML) framework. We stress that the purpose of this paper is not to present a new algorithm, but to derive the statistical properties of the resulting target parameter estimates.
At each RN, the ECA approach cancels the interference in the SC (DPI and CI) by projecting \eqref{eq:vsurv} onto a subspace that is orthogonal to the interference. Subsequently, the 2D Cross-Ambiguity Function (CAF), which is proportional to the negative log-likelihood, is computed to extract information about the target delay and Doppler parameters. Both the interference cancellation and the matched filtering is performed using a noisy estimate of the IO waveform, obtained from the RC.

Estimation of the target position and velocity is performed at the FC by combining the information received from each node. Following the ML framework, we consider the Direct Position Determination (DPD) approach, where the local log-likelihood function of each node is transmitted to the FC. In a practical implementation, the information needs to be compressed before transmission, and ultimately only the local delay-Doppler estimates are used. At high enough SNR, such a two-step approach may be close to optimal, and our statistical analysis could then provide a useful performance prediction for this approach as well. However, a more precise performance comparison is beyond the scope of this paper.

\subsection{Interference Cancellation and Delay-Doppler Estimation}
\label{sec:IC}

Under assumptions of AWGN and modeling the IO signal vector $\sbf_k$ as deterministic and unknown \cite{liu_glrt_2015,zhang_maximum_2019,zhou_direct_2024}, the negative log-likelihood function (ignoring constants) at RN $k$ is given by
\begin{equation}
    \frac{1}{\sigma_n^2} \left\|
\xbf_{k}^M - a_{k} \sbf_k^{M} \right\|^2 + \frac{1}{\sigma_e^2} \left\|
    \ybf_{k} - 
    \Sbf_{I,k}\fbf_k -
    d_{k}\, \sbf_k(\tau_{k}) \odot \vbf(\omega_k) \right\|^2 ,
    \label{eq:ML1}
\end{equation}
where we have collected the interference contribution to the SC in the matrix $\Sbf_{I,k}=[\sbf_k\ \Sbf_k]$, with amplitude vector $\fbf_k = [b_k^{}\ \cbf_k^T]^T$.

The local ML estimate is now found by minimizing (\ref{eq:ML1}) with respect to the unknown parameters that are specific to node $k$. The delay $\tau_k$ and Doppler $\omega_k$ parameters are functions of the target position and velocity according to (\ref{eq:tau})--(\ref{eq:Doppler}), and are therefore shared among the different nodes.
Both the RC and the SC contain information about the unknown IO signal, through the first and second terms in (\ref{eq:ML1}) respectively. Assuming the SNR of the IO signal to be much stronger in the RC than in the SC, due to line-of-sight and directive antenna, it is a reasonable approximation to determine the IO signal using the RC data only \cite{zhou_direct_2024}. This will also significantly simplify the estimation, since the first term in (\ref{eq:ML1}) can be put to zero, for example, by absorbing the amplitude into the estimate of $\sbf_k^{M}$ and taking $\hat{\sbf}_k^{M} = \xbf_k^M$. Using this approach, the noisy signal from the reference channel is used in lieu of the unknown IO waveform when processing the data from the surveillance channel, in accordance with \cite{colone_multistage_2009}. 

We remark that other approximations for reconstructing the reference signal are possible, such as the eigenvector approach of e.g. \cite{zhou_direct_2024}, or by exploiting knowledge of the signal structure. The analysis to be presented may also be applicable to these cases by replacing the RC noise term with the error of the so reconstructed reference signal.

Motivated by high SNR in the RC, we use
$\hat{\sbf}_k^M=\xbf_k^M$ from the first term of \eqref{eq:ML1} 
and rearrange the samples to form
\begin{equation}
\hat{\Sbf}_{I,k} = [\hat{\sbf}_k\ \hat{\Sbf}_k] = [\xbf_k\ \Xbf_k] = \Xbf_{I,k}\, ,
\label{eq:subest}
\end{equation}
where the sampling instances in $\xbf_k \in \mathbb{C}^N$ and $\Xbf_k \in \mathbb{C}^{N\times L}$ are aligned with those in $\sbf_k$ and $\Sbf_k$, and they are all available in $\xbf_k^M \in \mathbb{C}^{M+N}$.
Further, we take $\hat{\sbf}_k(\tau_k)=\xbf_k(\tau_k)$, which is calculated using interpolation when $\tau_k$ is not on the sampling grid.
Inserted into (\ref{eq:ML1}), this yields
\begin{equation}
 \ell(\fbf_k,d_k,\tau_k,\omega_k) = \left\|
    \ybf_{k} - 
    \Xbf_{I,k}\fbf_k -
    d_{k}\, \xbf_k(\tau_{k}) \odot \vbf(\omega_k) \right\|^2 .
    \label{eq:ML2}
\end{equation}
Note that since $\xbf_k \approx a_k\sbf_k$ contains the RC signal amplitude, the estimated amplitude parameters $b_k$, $\cbf_k$ and $d_k$ will in effect be normalized with respect to $a_k$. This normalization does not affect the estimation of the target position and velocity, but it must be taken into account in the statistical analysis as well as in an eventual target detection decision.

Minimizing (\ref{eq:ML2}) w.r.t. 
$\fbf_k$ and substituting the resulting estimate back into (\ref{eq:ML2}), results in an effective cancellation of the DPI and CI. 
To this end, the orthogonal projection matrices onto the span of $\Sbf_{I,k}$ and $\Xbf_{I,k}$ and their respective orthogonal complements are introduced as
\begin{align}
    {\Pibf}_k^{\perp} &=  \Ibf - {\Pibf}_k
    =  \Ibf -
    \Sbf_{I,k}\left(\Sbf_{I,k}^{H}\Sbf_{I,k}^{}\right)^{-1}\Sbf_{I,k}^H
    \label{eq:Piperp} \\
    \hat{\Pibf}_k^{\perp} &=  \Ibf - \hat{\Pibf}_k
    =  \Ibf -
    \Xbf_{I,k}\left(\Xbf_{I,k}^{H}\Xbf_{I,k}^{}\right)^{-1}\Xbf_{I,k}^H   .
    \label{eq:Piperphat}
\end{align}
Note that the amplitude scaling in $\Xbf_{I,k}$ has no effect on the projection matrix, so if \( \sigma_n ^2 = 0 \) we have $\hat{\Pibf}_k={\Pibf}_k$.

With these definitions, the minimum of (\ref{eq:ML2}) w.r.t. $\fbf_k$ reduces to the interference-cleaned criterion function
\begin{equation}
    \ell(d_k,\tau_k,\omega_k) =
     \left\| \hat{\Pibf}_k^{\perp} \bigl( \ybf_k -
    d_{k}\, \xbf_k(\tau_{k}) \odot \vbf(\omega_k) \bigr) \right\|^2 .
\label{eq:ML3}
\end{equation}
Next, we introduce the 
delay-Doppler ``steering vector" $\abf(\tau_k,\omega_k)$, along with its estimate $\hat{\abf}(\tau_k,\omega_k)$, obtained from the RC data as
\begin{align}
\label{eq:adef}
 \abf(\tau_k,\omega_k) &=
\sbf_k(\tau_{k}) \odot \vbf(\omega_k) \\
 \hat{\abf}(\tau_k,\omega_k) &=
\xbf_k(\tau_{k}) \odot \vbf(\omega_k)\,.
\label{eq:ahatdef}
\end{align}
We remind that the amplitude $a_k$ is present in (\ref{eq:ahatdef}) but not in (\ref{eq:adef}).
Substituting the minimizing $d_k$ from (\ref{eq:ML3}) back into the criterion then results in the final form
\begin{equation}
    \ell(\tau_k,\omega_k) = 
    \left\| \hat{\Pibf}_k^{\perp}\ybf_k  -
\hat{\Pibf}_k^{\perp}\frac{\hat{\abf}(\tau_k,\omega_k) \hat{\abf}^H(\tau_k,\omega_k)}{\hat{\abf}^H(\tau_k,\omega_k)\hat{\Pibf}^{\perp}\hat{\abf}(\tau_k,\omega_k)}\, \hat{\Pibf}_k^{\perp}\ybf_k\right\|\, .
\end{equation}
Clearly, minimizing $\ell(\tau_k,\omega_k)$ is equivalent to maximizing the following interference-canceled and normalized version of the 
cross ambiguity function, or ``spectrum":
\begin{equation}
P_k(\tau_k,\omega_k) =
\frac{ |\hat{\abf}^H(\tau_k,\omega_k) \hat{\Pibf}_k^{\perp} \ybf_k|^2}{\hat{\abf}^H(\tau_k,\omega_k)\hat{\Pibf}_k^{\perp} \hat{\abf}(\tau_k,\omega_k)} = 
\left\| \hat{\Pbf}_k
\, \ybf_k \right\|^2 ,
\label{eq:ML4}
\end{equation}
where 
$\hat{\Pbf}_k$
is the orthogonal projection matrix onto the range space of $\hat{\Pibf}_k^{\perp}\hat{\abf}(\tau_k,\omega_k)\,$:
\begin{equation}
\hat{\Pbf}_k= 
\frac{\hat{\Pibf}_k^{\perp}\hat{\abf}(\tau_k,\omega_k) \hat{\abf}^H(\tau_k,\omega_k)\hat{\Pibf}_k^{\perp}}{\hat{\abf}^H(\tau_k,\omega_k)\hat{\Pibf}_k^{\perp}\hat{\abf}(\tau_k,\omega_k)} = \Ibf - 
\hat{\Pbf}^{\perp}_k\, .
\label{eq:Pdef}
\end{equation}
Now, the numerator of \eqref{eq:ML4} is precisely the 2D CAF used in the ECA algorithm, (see Equation (15) in \cite{colone_multistage_2009}). 
The inner product of the interference-cleaned data $\hat{\Pibf}_k^{\perp} \ybf_k$ with the steering vector $\hat{\abf}(\tau_k,\omega_k)$ is in effect a 2D matched-filtering operation to the target model at hypothesized parameters $\tau_k$ and $\omega_k$.
The denominator does not depend on the SC data, and can be absorbed into the threshold of a Constant False Alarm Rate (CFAR) detector. 
For most practical IO waveforms, the denominator is nearly independent of the target parameters when $|\omega_k| \gg 0$ or if $\tau_k$ is far from the range of clutter delays. However, 
for a slow moving target within the clutter range,
the full form of \eqref{eq:ML4} will improve the estimation performance over the 2D CAF without normalization.

\subsection{Target Localization}
\label{sec:TargLoc}

Given the transmitted information from all RNs, the task at the FC is to estimated the target location and velocity. 
As alluded to above, for high enough SNR, this can be done using local estimates at each RN, such as in classical source localization. See, e.g.,  \cite{MalanowskiMK:2012} for methods based on time-delays only and
\cite{HoX:2004,DuW:2014} for localization using time-delay and Doppler. However, such a two-step approach, albeit being more convenient from a communication point of view, is inherently suboptimal from a statistical performance perspective. Thus, we assume the available data to be (the sampled versions of) (\ref{eq:ML4}) from all nodes. Note, though, that our analysis can also be used to predict the performance of the two-step approach, since it provides the mean-square errors of the estimated delay and Doppler parameters at each node if desired.

Since the data are independent, the global likelihood function simply adds all contributions, scaled by their respective noise variances according to \eqref{eq:ML1}:
\begin{equation}
V(\thetabf) = \sum_{k=1}^K \frac{1}{\sigma_{e,k}^2} P_k(\tau_k(\thetabf),\omega_k(\thetabf))\, .
\label{eq:GlobalML}
\end{equation}
Here, $\thetabf = (x,y,v_x,v_y)^T$ represents the target position and velocity parameters, and
$\tau_k=\tau_k(\thetabf)$ and $\omega_k=\omega_k(\thetabf)$ are known functions of $\thetabf$,
as given by
(\ref{eq:tau})--(\ref{eq:Doppler}). 
For simplicity, we assume that all SC noise variances $\sigma_{e,k}^2 \overset{\Delta}{=} \sigma_e^2 $ are identical in what follows, and the weights are therefore omitted.
Likewise, we set the RC noise variances to also be the same at all RNs, denoted $\sigma_n^2$. These assumptions are natural if all RNs are equipped with identical hardware.

The global ML approach is now to maximize (\ref{eq:GlobalML}) w.r.t. the 4-dimensional target parameter vector $\thetabf$. For each hypothesized target localization and velocity, the corresponding time-delay and Doppler parameters are calculated for each node. The resulting value of (\ref{eq:ML4}) is added to the global likelihood function, and if $\thetabf$ does not correspond to a $(\tau,\omega)$ pair 
at a particular node, we simply add zero.
This is clearly a computationally very demanding task, and different search strategies are possible, such as coordinate-wise search and gradient-type optimization.
The practical implementation is, however, beyond the scope of the present paper. Our goal is to quantify the achievable performance with such an optimal approach, which also serves as a benchmark for suboptimal approximations.

At this point, we remark that \cite{colone_multistage_2009} also introduces the ECA-B version, where data is processed in batches. This reduces the computational cost and memory requirements at the RNs, since the interference cancellation is performed on data of a lower dimension. It also opens up the possibility to track time-varying scenarios,
see, e.g. \cite{Colone_etal_SlidingECA_TAES2016}, as well as to mitigate the effect of range migration. It is straightforward to extend the statistical analysis to cover the case of batch-wise data, see \cite{Viberg_etal_CAMSAP2025}. Due to space limitations we do not pursue this possibility any further herein.

\section{Performance Analysis}
\label{sec:perf}

The proposed global ML estimator is approximate in the sense that it uses the reference channel as if it were the true IO signal. It is of interest to quantify analytically the effect of this approximation. Specifically, what SNR is required in the reference channel in order for the approximate ML estimates to achieve the Cramér-Rao lower bound for the target parameters, assuming a perfect IO signal knowledge? 

\subsection{Consistency}

The first step is to establish consistency, in the sense that for noise-free data, the criterion function (\ref{eq:ML4}) is maximized by the true target delay-Doppler pair, denoted $\tau_0$ and $\omega_0$ respectively, as seen from the $k$-th RN. It is easy to establish this result if the IO signal is such that the steering vector is \textit{unambiguous}. By this we mean that the following holds true over the range of target parameters of interest:
\begin{equation}
\text{rank}\Bigl[
\abf(\tau_0,\omega_0)\ \abf(\tau,\omega)
\Bigr] = 1\ \Longleftrightarrow \ (\tau_0,\omega_0)=(\tau,\omega)\,,
\label{eq:unambiguous}
\end{equation}
for all \( \tau, \tau_0 \in \mathbb{R} \) and \( \omega, \omega_0 \in (-\pi / \Delta T, \pi / \Delta T] \). As before, the steering vector is defined by $\abf(\tau,\omega) = \sbf(\tau) \odot \vbf(\omega)$. That the condition in \eqref{eq:unambiguous}
implies
statistical consistency of the estimator can be proved as follows: when the noise variances $\sigma_n^2\rightarrow 0$ and $\sigma_e^2\rightarrow 0$ vanish simultaneously, the criterion (\ref{eq:ML2}) converges in the mean-square sense to its noise-free version
\begin{equation} 
\ell _0(\etabf) =
 \| \Sbf_I \fbf_0 +
    d_{0}\, \abf(\tau_{0},\omega_0)
     - \Sbf_I \fbf -
    d\, \abf(\tau,\omega) \|^2,
\label{eq:noisefreecrit}
\end{equation}
where $\etabf=\{\fbf,d,\tau,\omega\}$ represents the set of unknown parameters, $\etabf_0$ its ``true" value, and where we have replaced $\xbf(\tau)$ and $\Xbf_I$ by the noise-free versions $\sbf(\tau)$ and $\Sbf_I$, since the amplitude is irrelevant here. Provided the convergence is uniform in $\etabf$, which is true if $\sbf(\tau)$ has bounded derivatives, then $\hat{\etabf}$ converges in the mean-square sense to the minimizer of (\ref{eq:noisefreecrit}). 
Clearly, $\ell _0(\etabf)\geq 0$, with equality if and only if
\begin{equation}
\Bigl[
\Sbf_I\ \ \abf(\tau_{0},\omega_0)\ \ \abf(\tau,\omega)
\Bigr] \left[
\begin{array}{c}
     \fbf_0 - \fbf \\ d_0 \\ -d\ \       
\end{array}
\right] = 0\, .
\label{eq:uniqueness}
\end{equation}
Provided the target is present so that $|d_0|>0$,
the above is possible only if $\tau=\tau_0$ and $\omega=\omega_0$, if (\ref{eq:unambiguous}) holds 
and if the vector $\abf(\tau_0,\omega_0)$ is not in the span of $\Sbf_I$. It also follows from (\ref{eq:unambiguous}) that the latter is satisfied whenever the target has non-zero Doppler, $\omega_0 \neq 0$.
In principle, it is also possible to localize a stationary target, provided it is known to reside outside the clutter-contaminated range bins. But since we are more interested in the case where the target is obscured by clutter, we consider only the case $\omega_0 \neq 0$.

The following key
result clarifies that \eqref{eq:unambiguous} is satisfied for a certain class of stochastic processes. The proof is deferred to Appendix \ref{Appendix3}.

\begin{theorem} \label{thm:rf_unambiguous}
Let \( s(t) \) be a circularly symmetric complex-valued bandlimited wide-sense stationary Gaussian stochastic process, with zero mean and spectral density \( \mathcal{S}_s (f) = 1 \) supported in \( [-B/2,B/2]\), \(B > 0\). Let \(0 \le t_1 < \dots < t_N\), \(N \ge 3\), be fixed real sampling times, and \( \tau_0, \omega_0 \) fixed real parameters. With the steering vector \( \abf(\tau,\omega) = \sbf(\tau)\odot\vbf(\omega) \) of \eqref{eq:adef}, i.e.\ \( [\abf(\tau,\omega)]_n = s(t_n - \tau)e^{j\omega t_n} \), the following holds: excluding the finitely many delays \( \tau \in \{\tau_0\}\cup\{\tau_0 + t_p - t_q : p\neq q\} \), almost surely there is no \( (\tau,\omega) \), with \( \tau \) ranging over any compact set of admissible delays and \( \omega\in\mathbb{R} \), for which \( \abf(\tau,\omega) \) and \( \abf(\tau_0,\omega_0) \) are linearly dependent.
\end{theorem}

In what follows, we will assume (\ref{eq:unambiguous}) to be true so that the time-delay and Doppler parameters can be uniquely determined at each RN. 
It is also assumed that the geometry of the scenario is such that $\thetabf$ can be uniquely retrieved from the $\{(\tau_k,\omega_k)\}_{k=1}^K$ pairs.
This is obviously true for a ``large enough'' number of RNs, placed, say, in a random fashion. A precise statement regarding the requirement
appears to be an open problem, but a complete solution was recently provided for the related GPS problem \cite{BOUTIN_GPSunique_2024}.

\subsection{Asymptotic Covariance Matrix}

The global ML estimate of the target parameters is obtained by maximizing (\ref{eq:GlobalML}) w.r.t. $\thetabf$. The analysis can also be applied to a single RN, in which case $V(\thetabf)=P_k(\tau,\omega)$ and $\thetabf=(\tau,\omega)^T$. For either case, 
treating $\thetabf$ as a continuous-valued parameter, the gradient is zero at the optimal value,
\begin{equation}
V'(\hat{\thetabf}) = 0\, ,
\label{eq:gradient}
\end{equation}
where $\hat{\thetabf}$ denotes the estimate. Assuming a large number of samples or high SNR, $\hat{\thetabf}$ will be close to the true value, say $\thetabf_0$. Then, a standard first-order expansion of (\ref{eq:gradient}) gives the first-order expression for the estimation error
\begin{equation}
\hat{\thetabf} - \thetabf_0 \simeq -\Hbf^{-1}V'(\thetabf_0)\, ,
\end{equation}
where $\Hbf$ is the limiting Hessian matrix as 
$\sigma_n^2,\sigma_e^2\rightarrow 0$. From this expression, we can compute the asymptotic covariance matrix of the estimation error as
\begin{equation}
\E [(\hat{\thetabf} - \thetabf_0) (\hat{\thetabf} - \thetabf_0)^T ] \approx
\Hbf^{-1}\E[V'(\thetabf_0) 
V'^T(\thetabf_0)]\,\Hbf^{-1} .
\label{eq:AsCovdef}
\end{equation}
We consider only the case of ``sufficiently high" SNR, i.e. $\sigma_n^2,\sigma_e^2$ being small enough. 
As demonstrated in the next section using Monte-Carlo simulations, this means that the estimation error is small, so that higher-order terms can be neglected.
For large $N$, this can be the case for target SNR values well below 0 dB.
In the following, we provide compact expressions for the Hessian matrix and the
first-order approximation (in $\sigma_n^2$ and $\sigma_e^2$) of the covariance matrix of the gradient appearing in (\ref{eq:AsCovdef}).

In order to express the approximate covariance matrix in a compact form, the following notation is first introduced. Let
\begin{equation}
\Dbf_k = \left[ \frac{\partial \abf(\tau_k(\thetabf),\omega_k(\thetabf))} {\partial\theta_1},\dots,\frac{\partial \abf(\tau_k(\thetabf),\omega_k(\thetabf))} {\partial\theta_4}
\right]
\label{eq:Ddef}
\end{equation}
denote the Jacobian matrix of the $k$-th steering vector, where $\thetabf=(\theta_1, \dots , \theta_4)^T = (x,y,v_x,v_y)^T$. 
For the single-node case, $\thetabf=(\tau,\omega)^T$ and 
$\Dbf_k$ has only two columns.
Further, define the $N\times N(L+1)$ matrix
\begin{equation}
\Zbf_k = \left[
b_k\,\Ibf + d_k\,\text{diag}(\vbf(\omega_k))\quad \cbf_k^T\otimes \Ibf
\right]\, ,
\label{eq:Zdef}
\end{equation}
and let $\Jbf_k$ be a selection matrix such that
$$
\text{vec}(\Sbf_I) = \Jbf_k \sbf_I\,,
$$
where $\sbf_I=[
s_k(t_{-L}),\dots,s_k(t_{N-1})
]^T$. 
The three components in (\ref{eq:Zdef}) model the error contributions to $\hat{\thetabf}$ due to not knowing the IO waveform perfectly (see \eqref{eq:T2noise} in Appendix \ref{Appendix1}). The first term to the left is due to imperfect DPI cancellation and the second handles the effect of using the incorrect steering vector in (\ref{eq:ML4}), i.e. a ``mismatched filter". The right component comes from the imperfect clutter cancellation. 

Further, let $\Pbf_k$ be the noise-free version of $\hat{\Pbf}_k$, defined in (\ref{eq:Pdef}), and
introduce the orthogonal projection matrix
\begin{equation}
\Tilde{\Pbf}_k = \Pibf_k^{\perp} \Pbf_k^{\perp}\Pibf_k^{\perp},
\label{eq:Ptilde}
\end{equation}
which projects onto $\text{span}(\Pibf_k^\perp\Bbf)$, where $\Bbf$ is any matrix that spans the orthogonal complement of the steering vector $\abf$. We can now state the second main contribution of this paper.
\begin{theorem}
Let $\hat{\thetabf}$ be obtained by maximizing (\ref{eq:GlobalML}) and assume the IO waveform be such that (\ref{eq:unambiguous}) holds. Assume further that the radar scenario is such that the 
$\{(\tau_k,\omega_k)\}_{k=1}^K$ pairs together uniquely determine $\thetabf$. Then, as $\sigma_n^2\rightarrow 0$ and $\sigma_e^2\rightarrow 0$ jointly, we have $\hat{\thetabf} \rightarrow\thetabf_0$ in probability; and its covariance matrix is to first order given by 
\begin{equation}
\E [(\hat{\thetabf} - \thetabf_0) (\hat{\thetabf} - \thetabf_0)^T ] = \Cbf_{\boldsymbol{\theta}} =  \textbf{CRB}_{\boldsymbol{\theta}} +
\Hbf^{-1} \Qbf \,\Hbf^{-1} ,
\label{eq:CRBascov} 
\end{equation}
where $\textbf{CRB}_{\boldsymbol{\theta}}  = -\sigma_e^2\,\Hbf^{-1}$ and
\begin{align}
\Hbf &= 
- 2 \sum\limits_{k=1}^K |d_k|^2\, \Re \left\{ \Dbf^H _k\Tilde{\Pbf}_k\,\Dbf_k^{}\right\} 
\label{eq:CRB_theta}
\\
\Qbf &= 2\sigma_n^2\,\,\sum\limits_{k=1}^K \frac{|d_k|^2}{|a_k|^2}\, \Re \left\{ \Dbf^H_k \Tilde{\Pbf}_k\,\Zbf_k^{}\Jbf_k^{} \Jbf_k^T\Zbf_k^{H}\Tilde{\Pbf}_k\,
\Dbf_k^{}\right\} \,.
\label{eq:Q}
\end{align}
\end{theorem}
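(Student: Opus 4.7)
The plan is to linearize the stationarity condition $V'(\hat{\thetabf})=0$ around $\thetabf_0$, so the error reads $\hat{\thetabf}-\thetabf_0 \simeq -\Hbf^{-1}V'(\thetabf_0)$, and then evaluate the Hessian and the covariance of the gradient in the joint limit $\sigma_n^2,\sigma_e^2\to 0$. Consistency in probability follows from Theorem \ref{thm:rf_unambiguous} together with the uniqueness of $\thetabf$ given the delay/Doppler pairs: the noise-free criterion $V_0(\thetabf)=\sum_k|d_k|^2\,\abf^H(\tau_k,\omega_k)\Pibf^\perp\abf(\tau_k,\omega_k)$ has its unique global maximum at $\thetabf_0$, so a standard uniform-convergence/continuous-argmax argument delivers $\hat{\thetabf}\stackrel{p}{\to}\thetabf_0$, which justifies the linearization.

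Next I would compute the Hessian $\Hbf=V''(\thetabf_0)$ in the noiseless limit. Writing $P_k=\ybf_k^H\hat{\Pbf}\ybf_k$ and noting that, when $\sigma_n^2=\sigma_e^2=0$, one has $\hat{\Pibf}^\perp=\Pibf^\perp$, $\hat{\abf}\propto \abf$, $\ybf_k=\Sbf_{I,k}\fbf_{0}+d_k\abf_k^0$, and $\Pbf\ybf_k=d_k\Pibf^\perp\abf_k^0$, the first derivatives vanish at $\thetabf_0$. Differentiating twice and keeping only the surviving terms uses the standard identity for projection derivatives: for any smooth $\abf(\thetabf)$,
\begin{equation*}
\tfrac{1}{2}\tfrac{\partial^2}{\partial\theta_i\partial\theta_j}\bigl(\ybf^H\Pbf\ybf\bigr)\Big|_{\thetabf_0,\ \ybf=d\abf} \;=\; -|d|^2\,\Re\bigl\{(\partial_i\abf)^H\Pibf^\perp\Pbf^\perp\Pibf^\perp(\partial_j\abf)\bigr\},
\end{equation*}
so that after summation over nodes and chain rule through $(\tau_k(\thetabf),\omega_k(\thetabf))$ we identify $\Hbf = -\sum_k|d_k|^2\Re\{\Dbf_k^H\Tilde{\Pbf}\,\Dbf_k\}$ (up to the factor $\tfrac12$ and a sign absorbed into $\Hbf^{-1}Q\Hbf^{-1}$). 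This reproduces the bracket in \eqref{eq:CRB_theta}.

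For the gradient covariance, the key is that $V'(\thetabf_0)$ depends linearly (to first order) on two independent noise sources: the SC noise $\{\ebf_k\}$ entering through $\ybf_k$, and the RC noise $\{\nbf_k\}$ entering through $\hat{\Pibf}^\perp$ and $\hat{\abf}$. By a perturbation expansion of the oblique projection $\hat{\Pbf}$ around $\Pbf$ and of $\hat{\abf}$ around $a_k\abf$, I would obtain
\begin{equation*}
\tfrac{1}{2}\partial_i P_k(\thetabf_0) \;\simeq\; \Re\bigl\{d_k^*\,(\partial_i\abf)^H\Tilde{\Pbf}\,\ebf_k\bigr\} + \Re\bigl\{d_k^*\,(\partial_i\abf)^H\Tilde{\Pbf}\,(\delta\hat{\Pbf})\,d_k\abf\bigr\}/|a_k|\cdot\text{(RC noise coupling)},
\end{equation*}
where the RC piece, after collecting the perturbations of the three building blocks (direct-path cancellation, steering-vector mismatch, clutter cancellation), assembles exactly into the matrix $\Zbf_k$ acting on $\Jbf_k\nbf_k/a_k$. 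The selection/duplication matrix $\Jbf_k$ appears because $\nbf_k$ propagates through the Toeplitz repetitions inside $\Sbf_{I,k}$ as well as through $\abf$. Independence of $\ebf_k$, $\nbf_{k'}$ and zero mean then kills the cross terms, and $\E[\ebf_k\ebf_k^H]=\sigma_e^2\Ibf$, $\E[\nbf_k\nbf_k^H]=\sigma_n^2\Ibf$ give two separate quadratic forms. Sandwiching with $\Hbf^{-1}$ produces $\sigma_e^2\Hbf^{-1}=\textbf{CRB}_{\boldsymbol{\theta}}$ from the SC part and $\Hbf^{-1}\Qbf\Hbf^{-1}$ from the RC part, as stated.

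The main technical obstacle is the careful first-order expansion of $\hat{\Pbf}$ with respect to the RC noise while keeping track of the three perturbation channels in a way that collapses neatly into $\Zbf_k\Jbf_k\Jbf_k^T\Zbf_k^H$; this requires using $\Pibf^\perp\abf_k^0$ lying in the range of $\Tilde{\Pbf}$'s orthogonal subspace and exploiting the identities $\Pibf^\perp\Pbf^\perp\Pibf^\perp=\Tilde{\Pbf}$ and $\Tilde{\Pbf}\abf_k^0=0$, $\Tilde{\Pbf}\Sbf_{I,k}=0$, to drop the many terms that would otherwise appear. The identification of $\textbf{CRB}_{\boldsymbol{\theta}}=\sigma_e^2\Hbf^{-1}$ as the classical CRB for perfectly known IO waveform is then a direct consequence of the fact that, under perfect IO knowledge, $\Tilde{\Pbf}$ is precisely the Fisher-information projector onto the subspace orthogonal to both the nuisance interference column space and the target steering vector.
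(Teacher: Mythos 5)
Your proposal follows essentially the same route as the paper: consistency via uniform convergence of the criterion to its noise-free version with a unique maximizer, the first-order expansion $\hat{\thetabf}-\thetabf_0\simeq-\Hbf^{-1}V'(\thetabf_0)$, the Hessian from the standard projection-derivative identities evaluated at zero noise, and the gradient covariance split into independent SC and RC contributions with the RC perturbations of $\hat{\Pibf}^\perp$ and $\hat{\abf}$ collected into $\Zbf_k\Jbf_k\nbf_{I,k}$. The key identities you invoke ($\Tilde{\Pbf}\abf=0$, $\Tilde{\Pbf}\Sbf_{I,k}=0$, independence killing cross terms) are exactly those used in the paper's Appendices A and B, so the argument is correct and not materially different.
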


\begin{proof} The covariance matrix of the gradient and the asymptotic Hessian matrix are derived, respectively, in Appendices \ref{Appendix1} and \ref{Appendix2}. 
Using (\ref{eq:AsCovdef}) then leads to (\ref{eq:CRBascov})--(\ref{eq:Q}). \end{proof}

\begin{remark}
The term $\textbf{\textit{CRB}}_{\boldsymbol{\theta}}$ in (\ref{eq:CRBascov}) is the CRLB for $\thetabf$ assuming a perfectly known IO signal, and the second term quantifies the excess error due to the noise in the reference channel. That (\ref{eq:CRB_theta}) is the CRLB for $\thetabf$ assuming a perfect reference signal follows from the fact that when $\sigma_n^2=0$, $\hat{\thetabf}$ is the exact ML estimate of $\thetabf$, which is statistically efficient to first order under mild conditions \cite{Kay:93}. 
\end{remark}

\subsection{Statistical Efficiency}

Using the results from the previous section, we can now address the question of how large the errors in the reference signal can be in order for its effect to be negligible. We remind that while the errors are assumed to emanate from a noisy reference channel, the analysis is useful also if the reference waveform is generated in another way, or if the errors are
due to other effects, such as phase noise. The important assumption is that the reference signal error is additive and white, not necessarily Gaussian distributed, and that its variance is small. 

The expression (\ref{eq:CRBascov}) has two terms, corresponding to the contributions to the estimation error from the noise in the surveillance channel (first term) and the reference channel (second term), respectively.
Thus, the question is how big the second term in (\ref{eq:CRBascov}) is compared to the CRLB term. Upon comparing (\ref{eq:CRB_theta})--(\ref{eq:Q}), we can conclude that the matrix $\Zbf_k^{}\Jbf_k^{}\Jbf_k^{T}\Zbf_k^H$ plays a crucial role. Specifically, 
the second term is negligible if it holds that
$$
\left\| \Zbf_k \Jbf_k\right\|_2^2 \ll \frac{\sigma_e^2}{\sigma_n^2}\,|a_k|^2\, .
$$

We can bound the left hand side using (\ref{eq:Zdef}) as
\[
\begin{split}
\left\| \Zbf_k\Jbf_k \right\|_2^2 & \leq 
\left\| \Jbf_k \right\|_2^2 
\left\| \Zbf_k \right\|_2^2 \leq (L+1)  \\ & 
\cdot \left(
\left\| (\cbf_k^T\otimes \Ibf) \right\|_2^2
+ \left\| b_k\Ibf +
d_k\,\text{diag}(\vbf(\omega_k)) \right\|_2^2 
\right)\, ,
\end{split}
\]
where we have used that $\left\| \Jbf_k \right\|_2^2 = L+1$, which is easily verified,
since $\Jbf_k$ is a selection matrix with orthogonal columns and the maximum diagonal element of $\Jbf_k^T\Jbf_k^{}$is $L+1$. 
Thus, we conclude that the proposed algorithm will achieve the CRLB using a perfectly known reference signal if the following condition is met for all $k$:
\begin{equation}
(L+1)\,\frac{|b_k|^2 + |d_k|^2 + \|\cbf_k\|^2}{\sigma_e^2} \ll
\frac{|a_k|^2}{\sigma_n^2}\, .
\label{eq:interferencebound}
\end{equation}
The right-hand side of (\ref{eq:interferencebound}) is the SNR in the RC, and the left-hand side is an upper bound on the total interference-to-noise ratio in the SC. 
Thus, we conclude that the requirement on the SNR in the RC to be able to ignore the RC noise depends on the powers of the DPI, the target and the clutter, and they are all scaled by the dimension of the interference subspace.

\begin{figure*}[!t]
  \centering
  \subfloat[]{%
    \includegraphics[width=0.49\textwidth]{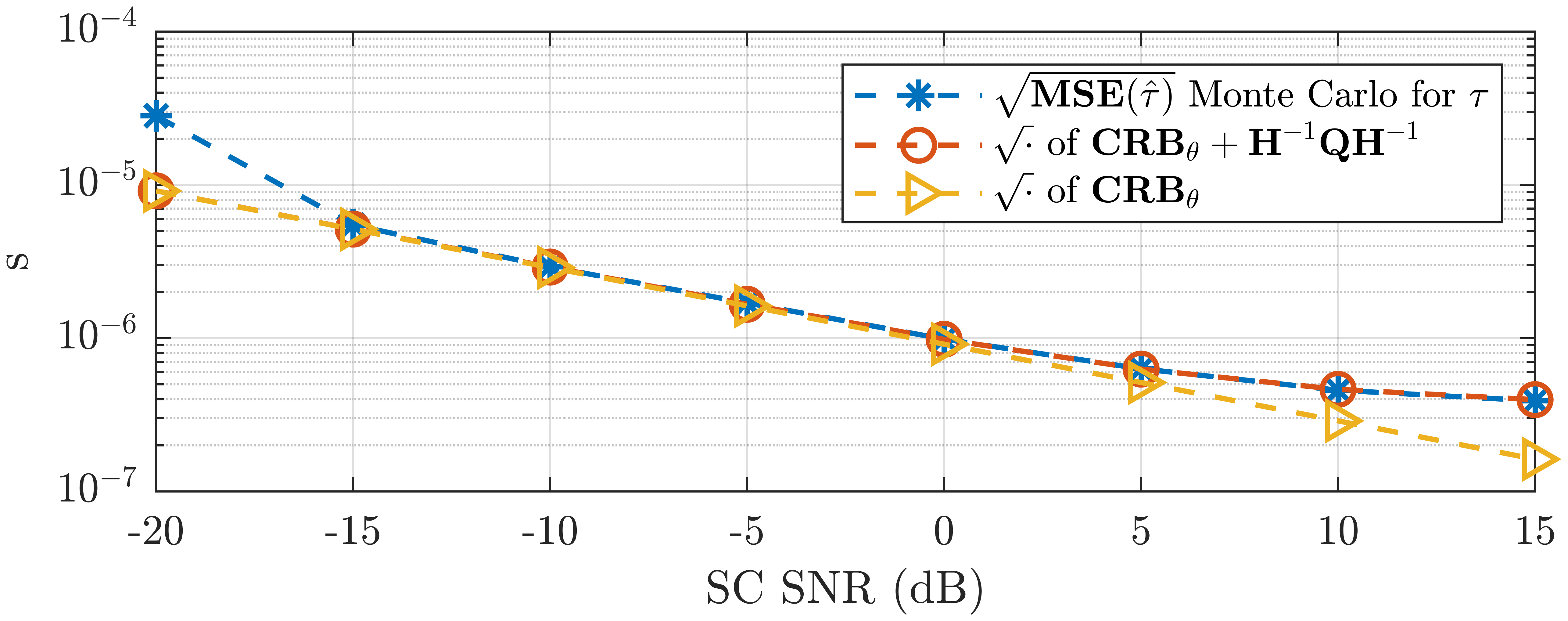}%
    \label{fig:tauSCsweep}}%
  \hspace{0.01\textwidth}%
  \subfloat[]{%
    \includegraphics[width=0.49\textwidth]{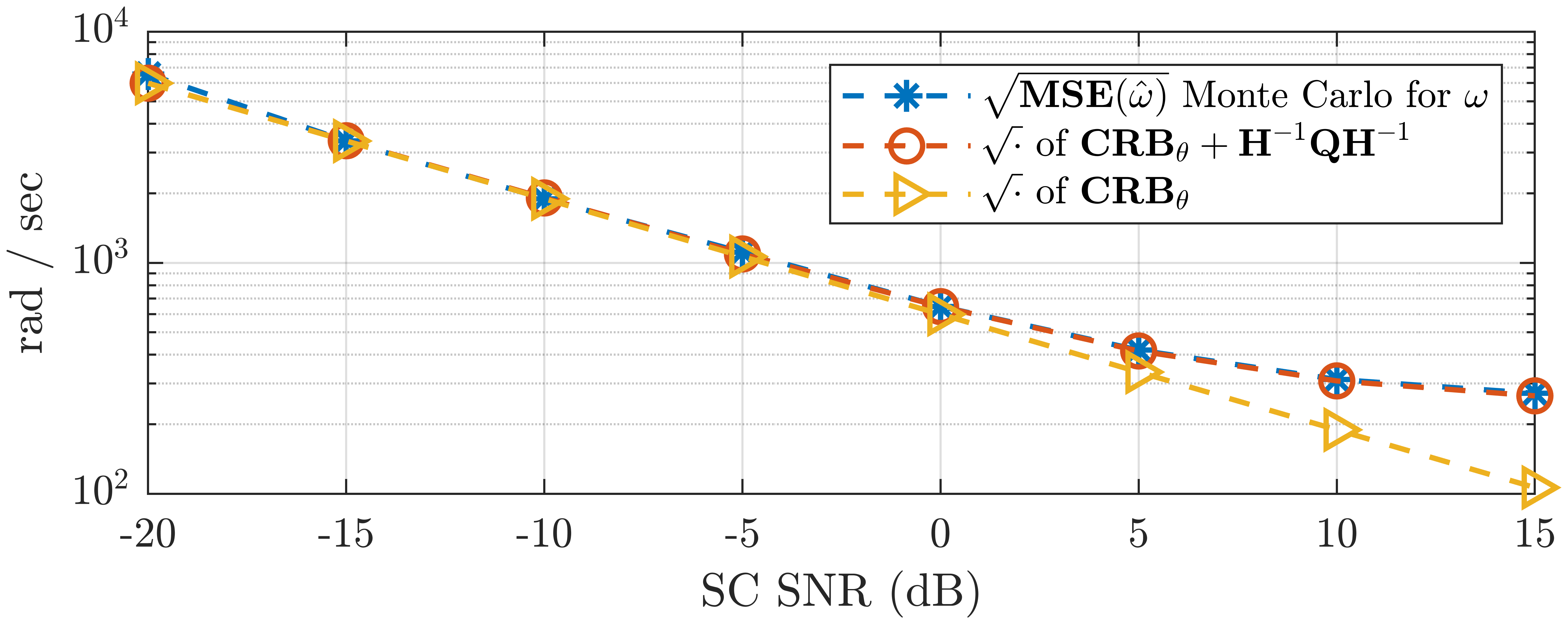}%
    \label{fig:omegaSCsweep}}
    \\
    \subfloat[]{%
    \includegraphics[width=0.475\textwidth]{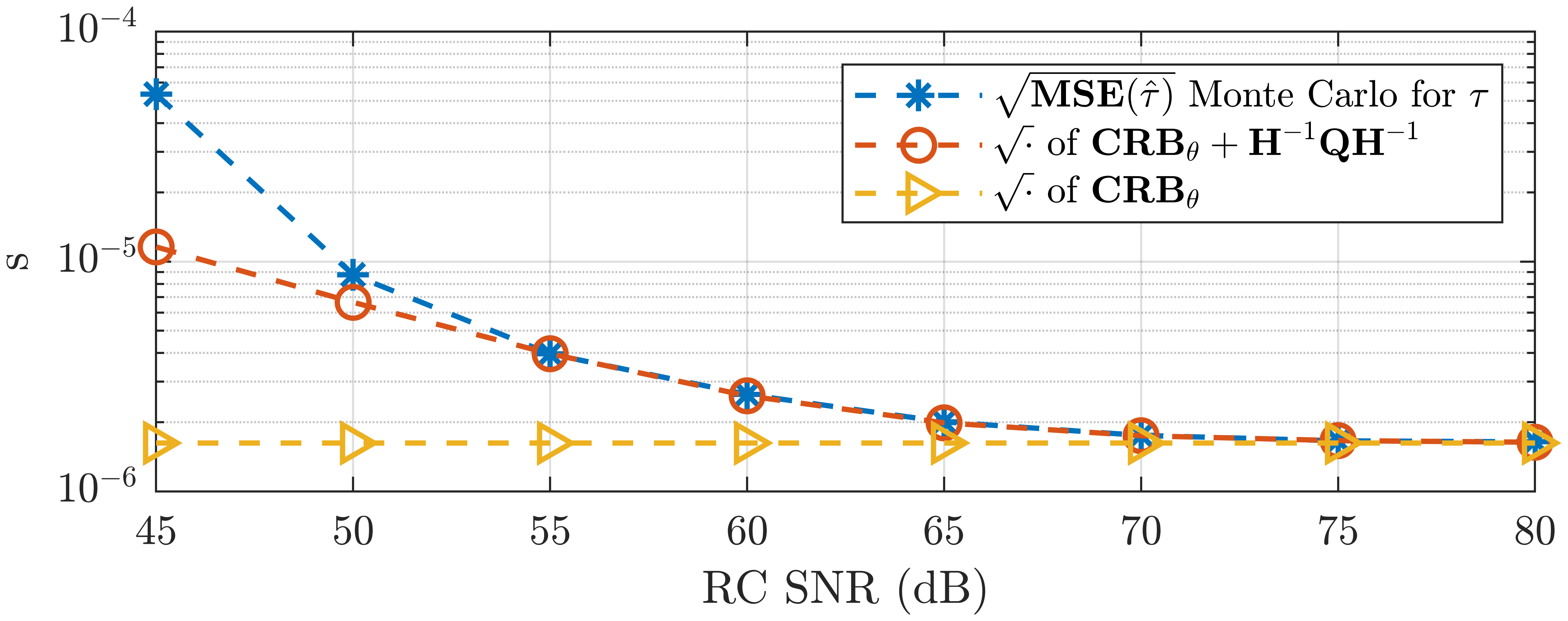}%
    \label{fig:tauRCsweep}}%
  \hspace{0.01\textwidth}%
  \subfloat[]{%
    \includegraphics[width=0.49\textwidth]{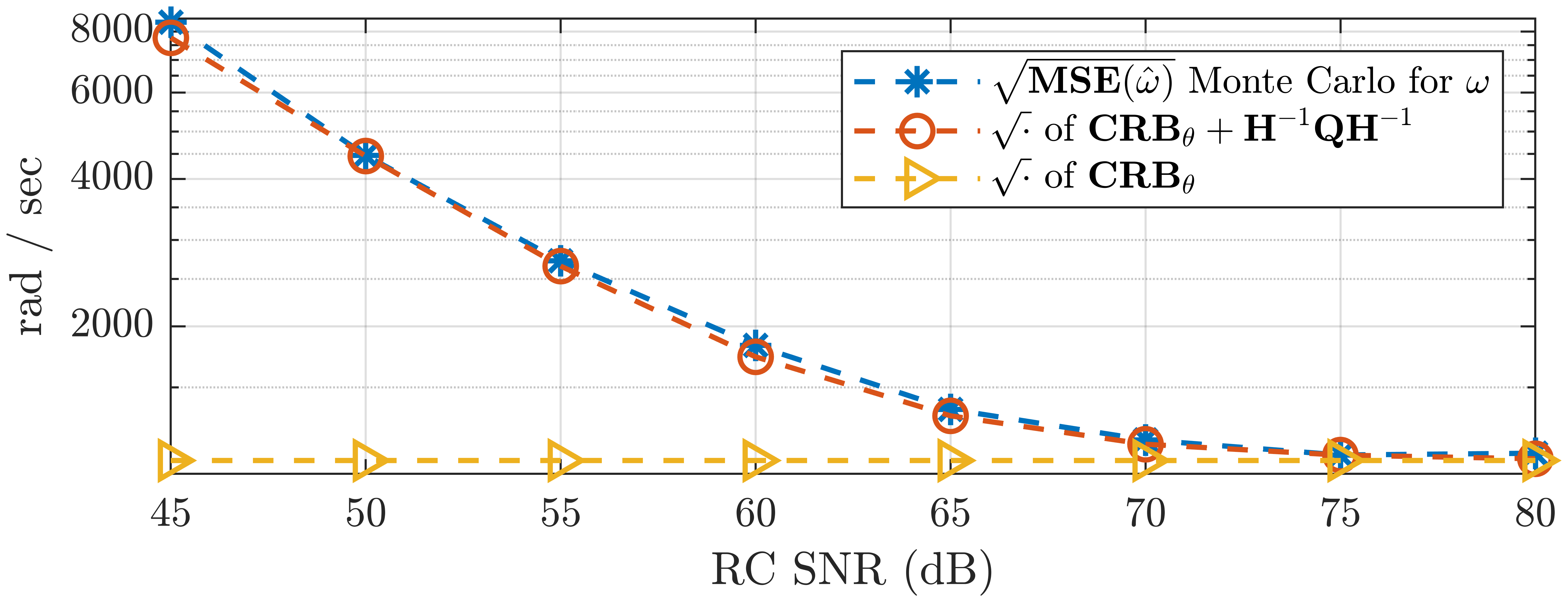}%
    \label{fig:omegaRCsweep}}
    \caption{Empirical and theoretical MSE for estimating time-delay (left plots) and Doppler frequency (right plots) versus SC SNR (top) and RC SNR (bottom) respectively. The RC SNR was kept constant to \( 75 \text{ dB} \) in \eqref{fig:tauSCsweep}-\eqref{fig:omegaSCsweep} while the SC SNR was kept constant to \( -5 \text{ dB} \) in \eqref{fig:tauRCsweep}-\eqref{fig:omegaRCsweep}.}
  \label{fig:base_bist_analysis}
\end{figure*}

\section{Numerical Examples}
In this section we test and validate the theory developed so far with numerical simulations. In all the following subsections the baseband IO sensing waveform is chosen to be a band\-limited, stationary Gaussian process with \(16\,\text{MHz}\) bandwidth and unit variance. The waveform is subsequently up- and down-converted to and from Radio Frequency (RF) using a carrier of \(600\,\text{MHz}\). Each RN samples the incoming waveform at \(25\,\text{MSps}\). These design values are chosen to closely reflect the real characteristics of the waveform used at the ``Brudaremossen'' TV tower station, located on the outskirts of Göteborg, Sweden. The wave is assumed to propagate along straight lines, and the amplitudes \(a_1\), \(b_1\) and \(d_1\) are determined using the (deterministic) bistatic radar equation. All ``empirical'' numerical simulations include a clutter area with a clutter filter of length \(L=70\), whose taps are generated at random at each RN. The clutter area is assumed to be made of strongly reflecting elements so that the clutter-to-noise ratio is \( \sim100 \) dB on average in Sections \ref{subsec:base_nexp} and \ref{subsec:tracking}, \( \sim60 \) dB in Section \ref{subsec:complete_mstatic}, whereas it is varied in \ref{subsec:theoretical_insights}.

\subsection{Baseline validation: single bi-static pair}
\label{subsec:base_nexp}
The first set of numerical experiments consists of Monte Carlo simulations to validate the formula \eqref{eq:CRBascov} in a bi-static setup, expanded from \cite{Viberg_etal_CAMSAP2025}. 
Thus, we set \( K = 1 \) and use \( \thetabf = (\tau, \omega)^T  \) as target parameters. 
A single target is present, with radar cross-section \( \approx 0.02 \text{ m}^2 \), and moving at constant speed of \( \approx 200 \text{ m/s} \) (with constant velocity vector \( (70, -190) \text{ m/s} \) and corresponding to \(\approx 1320 \text{ rad/sec } \) in our bistatic setup). The acquisition time equals \( N = 2^{13} \) samples, corresponding to approximately \( 0.33 \text{ ms} \) of data at the sampling rate $F_s = 1/\Delta T = 25$ MSps.

In Figure \ref{fig:base_bist_analysis}, we present the results for two scenarios using different SNR levels, defined as $20\log_{10}(|d_k|/\sigma_e)$ (SC) and $20\log_{10}(|a_k|/\sigma_n)$ (RC), respectively. The empirical RMSE (Root-Mean-Square-Error) values are based on \(6000\) noise instances, and in each the criterion function \eqref{eq:GlobalML} is maximized using the Nelder-Mead algorithm with the ``true'' parameter values \( \thetabf_0 \) as starting point. This initialization is used, since we are concerned with the quality of the global optimum rather than the performance of a specific implementation.
Figure \ref{fig:base_bist_analysis} summarizes the numerical simulations as described here. It can be seen that the theory agrees well with the empirical results above a certain threshold, which in this case is \(-15\) dB in the SC (Figure \ref{fig:tauSCsweep}) or 55 dB in the RC (Figure \ref{fig:tauRCsweep}) for the time-delay estimate. This illustrates that the first-order approximations are valid whenever the estimation error is ``small enough'', which indeed can occur at quite small SNR values. 
We may claim that the theory predicts the empirical performance well under normal operating conditions also in a passive radar setting with low target SNR.

Figure \ref{fig:base_bist_analysis} also visualizes the excess error caused by noise in the RC. The mismatch between the MSE of the estimator and the CRLB is correctly ``predicted'' by the covariance \eqref{eq:CRBascov}, and it occurs both at a low SNR in the RC and at a high SNR in the SC, as seen in \eqref{eq:interferencebound}.
In this case, the inequality in \eqref{eq:interferencebound} is in fact violated at the critical point, underscoring that this condition is sufficient but not necessary.

\begin{figure}[htbp]

  \centering
  \begin{subfigure}[b]{0.47\textwidth}
    \centering
    \includegraphics[width=\linewidth]{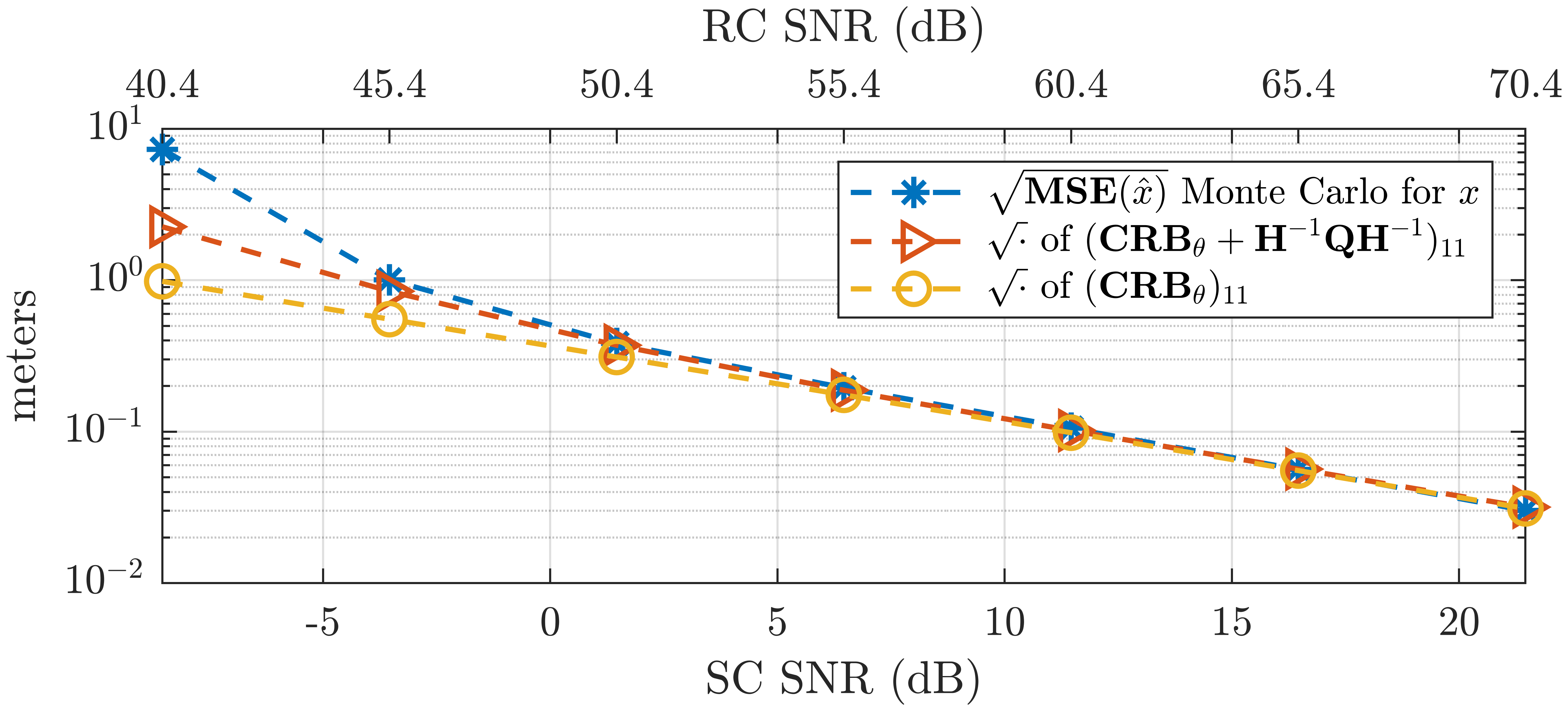}
    \caption{}
     \label{fig:tau_bllensweep_multi}
  \end{subfigure}
  
  \begin{subfigure}[b]{0.47\textwidth}
    \centering
    \includegraphics[width=\linewidth]{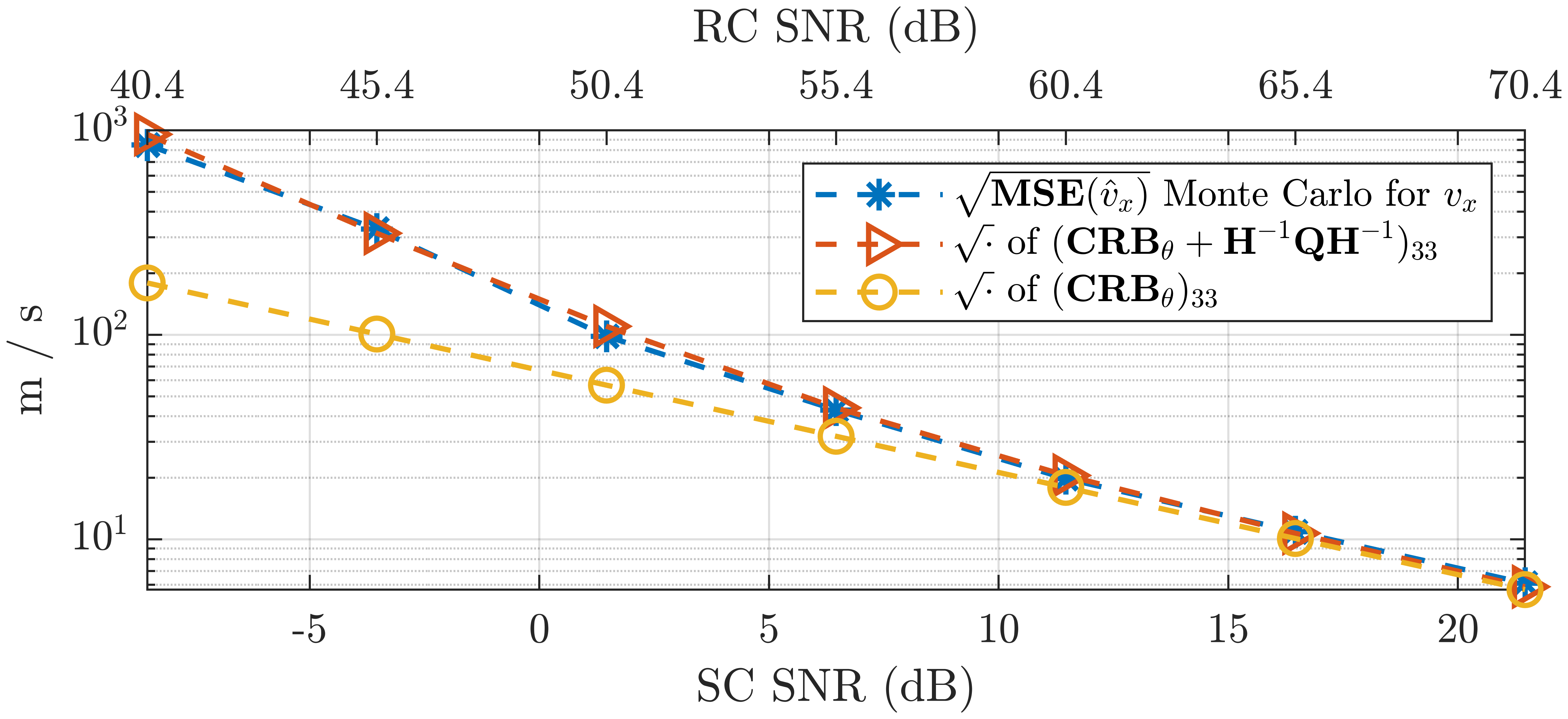}
    \caption{}
    \label{fig:omega_bllensweep_multi}
  \end{subfigure}

  \caption{Monte Carlo simulation versus theory for localization with \(4\) RNs. Plots for \(y\) and \(v_y \) are qualitatively identical and thus omitted. SC SNR is an average over the different RNs.}
  \label{fig:both_bllensweep_multi}
\end{figure}

\subsection{Target localization: complete multi-static setup}
\label{subsec:complete_mstatic}

The next step is to validate the formulas \eqref{eq:CRB_theta} and \eqref{eq:Q} in a multi-node setup. The main difference with the single bi-static setup lies in an ambiguity reduction, that makes it meaningful to estimate the full suite of parameters \( \thetabf = (x,y,v_x,v_y)^T \). We remind 
that in the bistatic geometry, the relations between \( (\tau, \omega) \) and \( (x,y,v_x,v_y) \) are given by \eqref{eq:tau} and \eqref{eq:taudot}. With this in mind, we perform target localization and velocity estimation using a similar radar scene as in the previous subsections, but with \(K=4\) RNs, located at \( (-300,300), \ (300, 300), \ (300, -300) \text{ and } (-300,-300) \) meters respectively, while the IO sits at \( (0,0) \), and the target moves with motion equation \( (5,-10) + (100\sqrt{2}, 100 \sqrt{2})t  \), \( t \ge 0 \). Since we assumed constant (thermal) noise power at each RN, the SNR was changed by increasing the transmitter output. This has the effect of changing the SNR in both the SC and the RC, which is why the MSE is approaching the CRLB for increasing SNR. 
Figure \ref{fig:omega_bllensweep_multi} again shows excellent agreement between the empirical and theoretical RMS values. The same holds for Figure \ref{fig:tau_bllensweep_multi}, although it can be seen that at SC SNR \(\leq -4\) dB, yielding an RMS location error of more than 1 m, the 
SNR is not high enough for the first-order covariance expansion \eqref{eq:CRBascov} to fully describe the excess error. It is also confirmed once more that a very high RC SNR is required to reach the CRLB that assumes a perfect reference signal. 

\subsection{Theoretical SNR insights}
\label{subsec:theoretical_insights}

\begin{figure}[htbp]
    \includegraphics[width=0.95\linewidth]{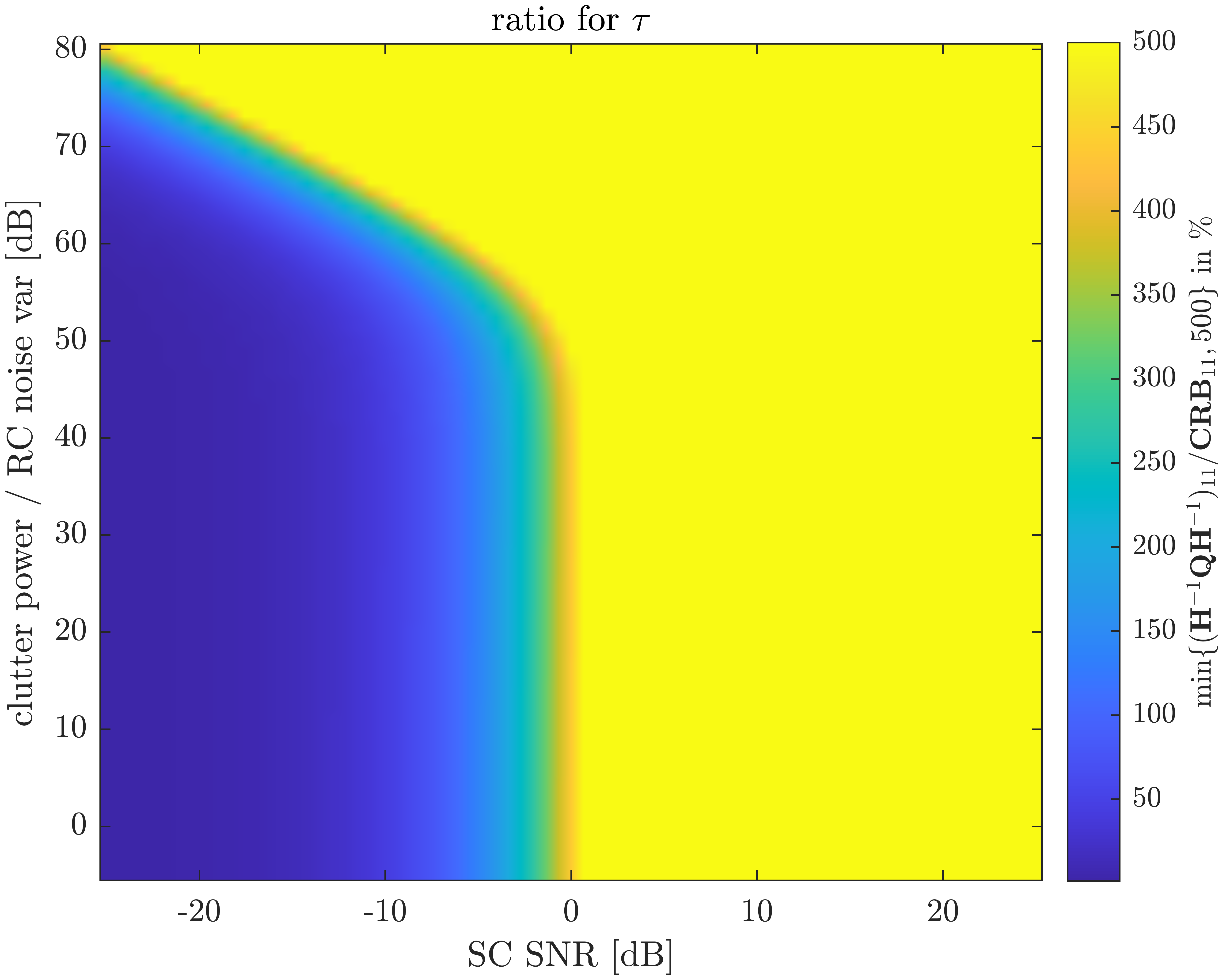}
  \caption{Ratio in \( \%\) between the $\tau$ components of the first order term $\Hbf^{-1}\Qbf\Hbf^{-1}$ in \eqref{eq:CRBascov} and the CRLB term $\textbf{\textit{CRB}}_{\boldsymbol{\theta}} = -\sigma_e ^2 \Hbf^{-1}$, with clipping at \( 500 \%\). The plot for \(\omega\) is similar and thus omitted.}
  \label{fig:theor_snr_insights}
\end{figure}
Figure~\ref{fig:theor_snr_insights} displays the interplay between \eqref{eq:CRB_theta} and \eqref{eq:Q} as functions of the SC SNR and the clutter-to-RC noise ratio. The figure shows, in percentage, the ratio
\((\Hbf^{-1}\Qbf\Hbf^{-1})_{11} / (\textbf{\textit{CRB}}_{\boldsymbol{\theta}})_{11} \), which equals the ``excess error'' \( (\Cbf_{\boldsymbol{\theta}})_{11}/(\textbf{\textit{CRB}}_{\boldsymbol{\theta}})_{11} - 1  \), where $\Cbf_{\boldsymbol{\theta}}$ is the total covariance matrix \eqref{eq:CRBascov}. The bistatic radar scene with one receiver and one transmitter was held fixed, while \( \sigma_e^2 \) and the clutter power were varied to achieve the desired SNR values. The RC SNR was kept constant at \(60\,\text{dB}\). The contour plot reveals two distinct regimes separated by a narrow transition region: the blue zone in which the CRLB dominates the excess error due to noise in the reference channel, and the yellow in which the opposite holds. For low or moderate clutter, the transition occurs at a constant SC SNR value around 0 dB. As the clutter power increases beyond a critical value, the transition occurs at an increasingly lower SNR in the SC.

\subsection{Target tracking: complete multi-static setup}
\label{subsec:tracking}
\begin{figure}[htbp]
  \centering
    \includegraphics[width=0.45\textwidth]{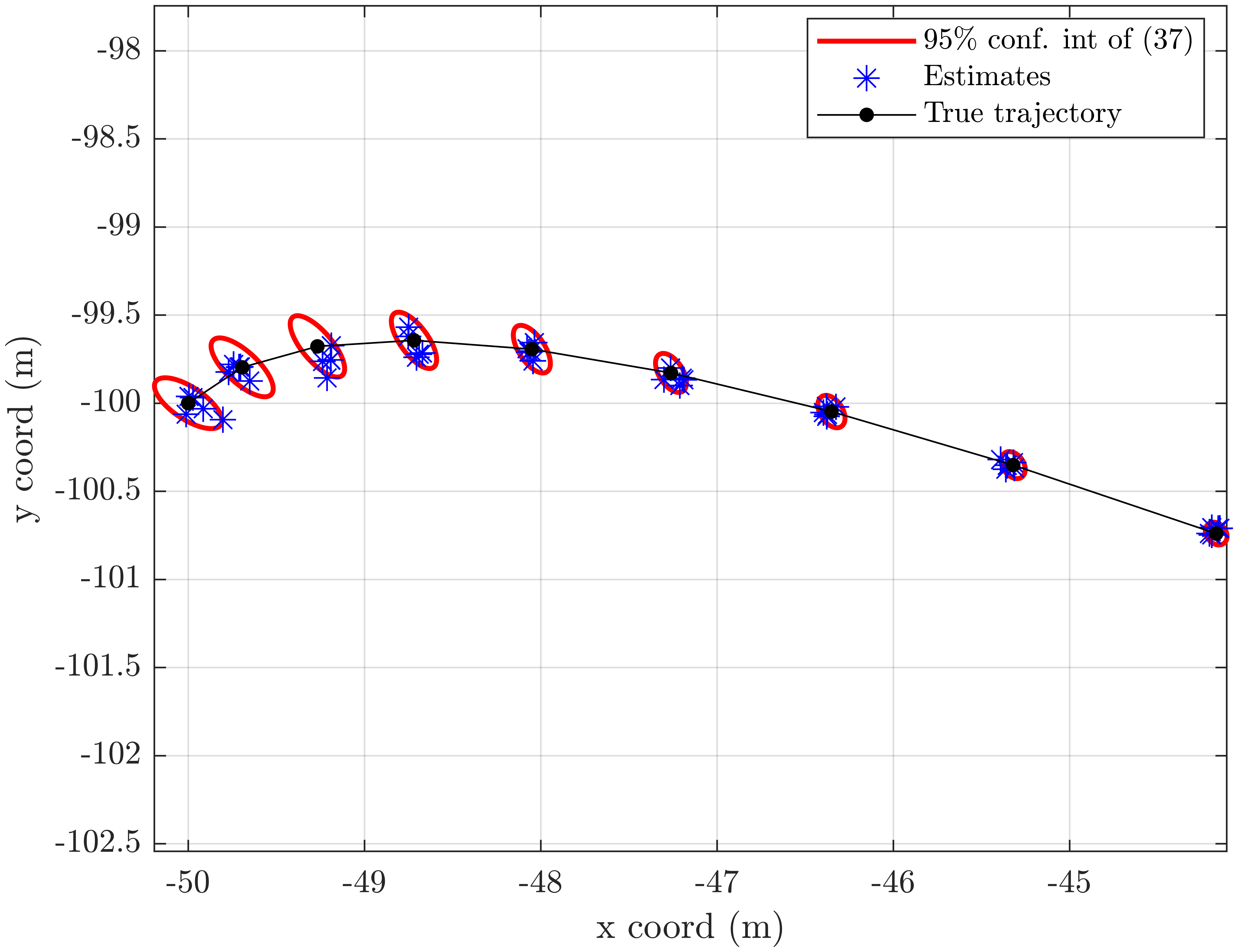}\\[1ex]
    \includegraphics[width=0.45\textwidth]{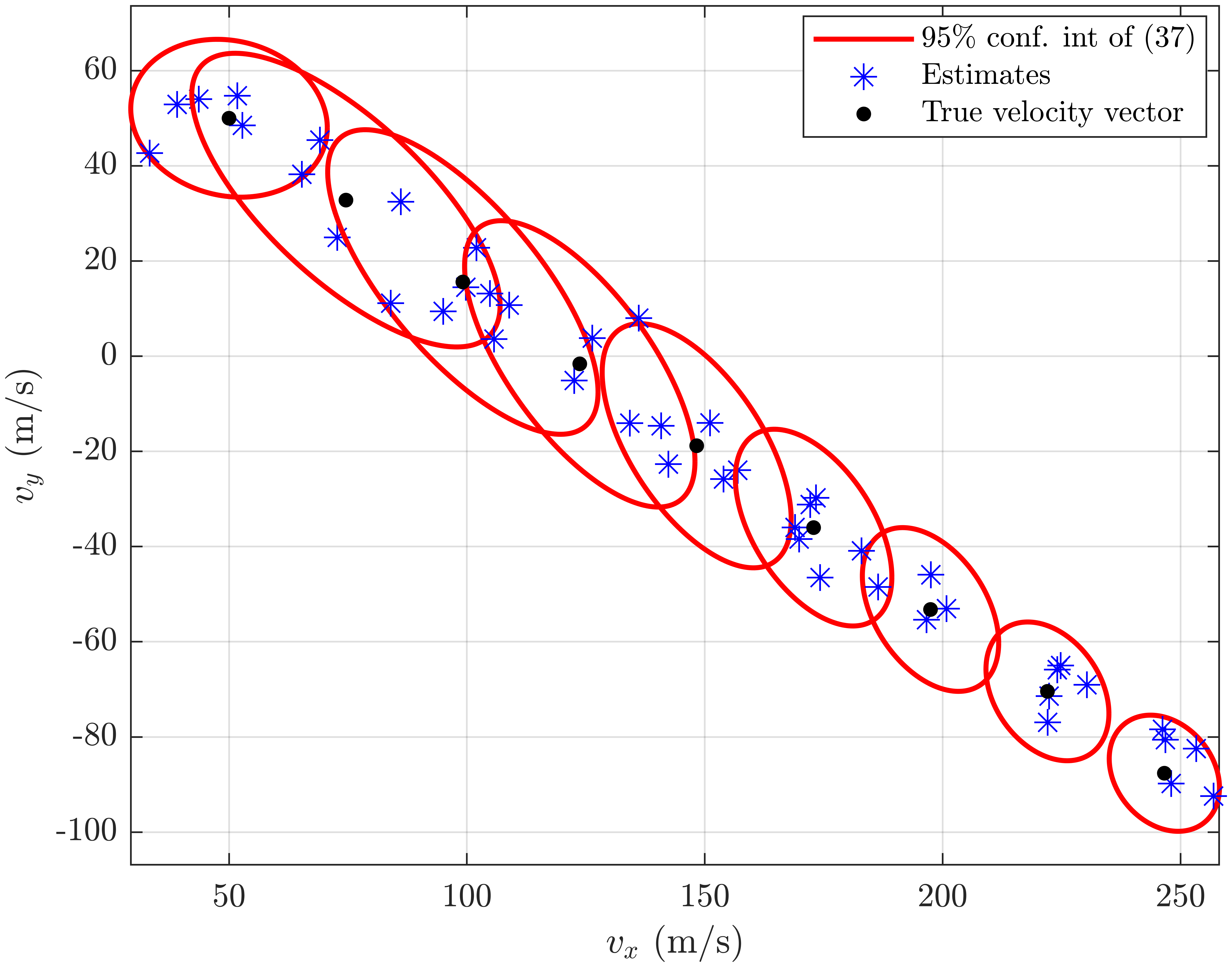}%
    \caption{Target tracking illustration.}
  \label{fig:trackers_posandvel}
\end{figure}
To conclude the numerical simulations, we illustrate in Figure \ref{fig:trackers_posandvel} an example of a ``target tracking scenario", where position and velocity estimation is performed within a multi-static setup. No post-filtering is done on the estimates, though. In this scenario, we have \(5\) RNs located at points \( (300 \cos(\alpha_k), 300 \sin(\alpha_k)) \) meters with \( \alpha_k = 2 \pi (k -1)/5 \), \( k=1,\dots,5 \). The IO is still located at the origin. We run 5 Monte Carlo noise realizations per sensing interval, and plot them altogether along with \( 95 \% \) confidence ellipses using \eqref{eq:CRBascov}. The size of the ellipses change as a result of the scenario geometry, as well as the target's acceleration, which is 
non-zero in this example.

\section{Conclusions}

The main contribution of the present paper is a statistical characterization of the
Extensive Cancellation Algorithm (ECA) \cite{colone_multistage_2009}, 
which performs
target localization using spatially separated passive radar receivers. 
Each receiver is equipped with a Reference Channel (RC), with a line-of-sight connection with the illuminator of opportunity, and a Surveillance Channel (SC) that captures reflections from the target of interest. 
The ECA algorithm uses 
data from the RC to cancel interference in the SC, in the form of direct-path interference and clutter from stationary objects. After canceling the interference, the time-delay and Doppler parameters are estimated using a non-linear least-squares approach that results in a 2-D spectrum (or normalized cross-ambiguity function). A Fusion Center (FC) collects information from all Receiving Nodes (RNs), and calculates the final target position and velocity estimates. Our analysis is based on the so-called Direct Position Determination (DPD) approach, in which the complete spectra from all RNs are assumed available at the FC. However, the results are also useful for the two-step method, where each node calculates the distance and radial velocity of the target, and only these parameters with corresponding MSE estimates are transmitted to the FC.

Our analysis provides sufficient conditions for the target parameter estimates to be consistent in the noiseless case, and this is shown to hold 
for a general class of
transmitted signals. Further, we derive the covariance matrix of the target parameter estimates under the assumption of sufficiently small noise variances in both the reference and the surveillance channel. This is expressed as the sum of the Cramér-Rao lower bound for a noise-free reference signal and a penalty term that explicitly captures the effect of noise in the reference channel. Based on this, we derive an explicit sufficient condition for the effect of the RC noise to be negligible compared to the effect of noise and interference in the SC.

The theoretical MSE expressions are found to agree well with the results of Monte-Carlo simulations in a wide range of SNRs. In general, the agreement is good whenever the estimation errors are small, which can happen for a very low SNR if the data collection time is long. The usefulness of the derived formulas is further illustrated by calculating the theoretical MSE for
varying operating regimes. In a studied scenario, it is found that for moderate clutter, the
excess error due to noise in the RC can only be neglected if the SNR in the RC is about 60 dB higher than that in the SC. 
However, as the power of the clutter interference is increased beyond a certain critical value, the requirement on the SNR in the RC relative to that in the SC becomes even more strict. This is natural since the ability to cancel clutter depends on the accuracy of the interference subspace estimate, and the requirement should become stricter with increasing clutter power.
Similar calculations can be made to investigate the effect of varying other system parameters 
without having to resort to costly Monte-Carlo simulations. Our analysis is therefore useful for guiding the design and practical experimental setup of a passive radar system. 

\appendices

\section{Covariance Matrix of the Gradient}
\label{Appendix1}

The purpose of this appendix is to derive the term $\Wbf = \E[V'(\thetabf_0)V'^T(\thetabf_0)]$ appearing in (\ref{eq:AsCovdef}).
Clearly, from (\ref{eq:GlobalML}), ignoring the weighting since $\sigma_{e,k}^2 \overset{\Delta}{=}\sigma_e^2$, we have
\begin{equation}
V'(\thetabf_0) =
 \sum_{k=1}^K P'_k(\thetabf_0)\, ,
 \label{eq:Vprim}
\end{equation}
where we regard $P_k$ as a function of $\thetabf$, indirectly through the known functions $\tau_k(\thetabf)$ and $\omega_k(\thetabf)$. Alternatively, for a single node we omit the sum and let $\thetabf = (\tau,\omega)^T$ be the parameter vector. Let $P'_k$ denote the derivative of $P_k$ w.r.t. one of the components in $\thetabf$. 
From (\ref{eq:ML4}) we have
\begin{equation}
P_k'(\thetabf_0) = \ybf_k^H 
\hat{\Pbf}'_k
\ybf_k^{}\, .
\end{equation}
The projection matrix $\hat{\Pbf}_k$ depends on $\thetabf$ through the steering vector $\hat{\abf}$, where the dependence on $\tau_k$ and $\omega_k$ has been suppressed.
Applying the perturbation theory of projection matrices (e.g. \cite{GolubP:73,viberg_ottersten_1991}) now yields
\begin{align}
P_k' &= \ybf_k^H \left(
\hat{\Pbf}_k^{\perp}\hat{\Pibf}_k^{\perp} \hat{\abf}' (\hat{\Pibf}_k^{\perp}\hat{\abf})^+ + 
(\hat{\Pibf}_k^{\perp}\hat{\abf})^{+H} \hat{\abf}'^H\hat{\Pibf}_k^{\perp} \hat{\Pbf}_k^{\perp}
\right) \ybf_k^{} 
\nonumber \\
&= 2\,\Re\left\{ 
\ybf_k^H
\hat{\Pbf}_k^{\perp}\hat{\Pibf}_k^{\perp} \hat{\abf}' (\hat{\Pibf}_k^{\perp}\hat{\abf})^+
\ybf_k^{} \right\} ,
\label{eq:Pkprim}
\end{align}
where $(\cdot)^+$ denotes the Moore-Penrose pseudo-inverse of a matrix and $\hat{\Pbf}_k^{\perp}=\Ibf-\hat{\Pbf}_k$.
In these expressions, it is understood that $\hat{\abf}$ and its derivative $\hat{\abf}' =
\partial\hat{\abf}/\partial\thetabf$ should be evaluated at the true parameter vector $\thetabf_0$.
Next, we insert the expression for the received SC signal as
\begin{align}
\ybf_k &= \ybf_{I,k} + d_k\abf + \ebf_k\,
\label{eq:ydef} \\
\ybf_{I,k} &=  [\sbf_k,\Sbf_k] \left[
\begin{array}{l}
   b_{k}  \\ \cbf_k 
\end{array} \right] = \Sbf_{I,k} \fbf_k
\end{align}
where $\ybf_{I,k}$ is the total interference component. In addition, a first-order expression for the interference cancellation matrix $\hat{\Pibf}_k^{\perp}$ is needed. Arrange the RC signal and noise components conformably with $\Sbf_{I,k}$ above, so that $\Xbf_{I,k} = a_k\,\Sbf_{I,k} + \Nbf_{I,k}$.  
Assuming $\|\Nbf_{I,k}\|$ to be much smaller than $\|a_k\Sbf_{I,k}\|$ (in the mean-square sense), we have
\begin{equation}
\hat{\Pibf}_k^{\perp} \simeq \Pibf_k^{\perp} - \Delta\Pibf_k \, ,
\end{equation}
where $\Pibf^{\perp}_k$ is defined in (\ref{eq:Piperp}). The $\simeq$ sign means first-order approximation in mean-square, and using \cite{GolubP:73} we get
\begin{equation}
\Delta\Pibf_k = \Pibf_k^{\perp}
\Nbf_{I,k}^{}(a_k\Sbf_{I,k})^+ +
(a_k\Sbf_{I,k})^{+H} \Nbf_{I,k}^H \Pibf_k^{\perp} \ .
\label{eq:DeltaPi}
\end{equation}
In the approximation of (\ref{eq:Pkprim}), we will use that $\Pibf_k^\perp\ybf_{I,k} = 0$ and $({\Pibf}_k^{\perp}\abf)^+\abf = 1$. Further, let
$$
{\Pbf}_k= 
\frac{{\Pibf}_k^{\perp}{\abf}{\abf}^H{\Pibf}_k^{\perp}}{{\abf}^H{\Pibf}_k^{\perp}{\abf}} = \Ibf - \Pbf_k^{\perp}
$$
denote the projection matrix (\ref{eq:Pdef}) onto the noise-free interference-canceled steering vector $\Pibf_k^{\perp}\abf$. Both these factors are subject to perturbation, where $\Delta\Pibf_k$ is given in (\ref{eq:DeltaPi}). Since $\hat{\abf}$ is constructed from $\xbf_k$ and not $\sbf_k$, we have $\hat{\abf}\approx a_k\abf$ and $\Delta\abf=\hat{\abf}-a_k\abf=\nbf_k\odot \vbf(\omega_k)$.
The first-order expression of $\hat{\Pbf}_k$ is therefore obtained as
\begin{align}
\hat{\Pbf}_k &= \Pbf_k + \Delta\Pbf_k \\
\Delta\Pbf_k &= \Pbf_k^{\perp} (\Pibf_k^{\perp}\Delta \abf - \Delta\Pibf_k\, a_k\abf)( \Pibf_k^{\perp}a_k\abf)^+ \\ & + (\Pibf_k^{\perp}a_k\abf)^{+H}
(\Pibf_k^{\perp}\Delta \abf - \Delta\Pibf_k\, a_k\abf)^H \Pbf_k^{\perp} .
\label{eq:DeltaP}
\end{align}
We note, in passing, that both $\Delta\Pibf_k$ and $\Delta\Pbf_k$ are of order $\sigma_n^2/|a_k|^2$ in the mean-square sense, which we recognize as the Noise-to-Signal Ratio in the RC, as expected.

Inserting (\ref{eq:ydef}) -- (\ref{eq:DeltaP}) into (\ref{eq:Pkprim}), we can split the gradient into the two components
\begin{equation}
P_k'(\thetabf_0) \simeq T_{1,k} + T_{2,k} \,
\end{equation}
where $T_{1,k}$ represents the first-order contribution from the noise in the SC, and $T_{2,k}$ contains the effect of the noise in the RC channel. After some manipulations, which we omit, the terms are obtained as
\begin{equation}
T_{1,k} = 2\,\Re\left\{
\ebf_k^H \Pbf_k^\perp \Pibf_k^{\perp} \abf' d_k
\right\}
\end{equation}
and
\begin{equation}
T_{2,k} = \frac{2}{a_k}\,\Re\left\{
\left(\Nbf_{I,k}\fbf_{I,k} + \Delta\abf\,d_k\right)^H
\Pbf_k^\perp\Pibf_k^{\perp}\abf' d_k \right\}
\end{equation}
respectively. 

Given the first-order approximation of the gradient, we can now compute its approximate covariance matrix. Since the terms in (\ref{eq:GlobalML}) are independent, the covariance matrix can be computed for each term separately. Denote the first-order covariance matrix for the $k$-th node $\Wbf_k$. Its $(i,j)$-th element is then given by
\begin{equation}
\Wbf_k(i,j) = \E \left[ 
\frac{\partial P_k(\thetabf)} {\partial\theta_i}
\frac{\partial P_k(\thetabf)} {\partial\theta_j}
\right] = T_{1,k}(i,j) + T_{2,k}(i,j)
\label{eq:Qdef}
\end{equation}
where 
we have used that $T_{1,k}$ and $T_{2,k}$ are independent.
Denoting
${\abf}'_i =
\partial{\abf}/\partial\theta_i$,
we have
\begin{equation}
\label{eq:T1ij}
\begin{aligned}
T_{1,k}(i,j)
&= 4\, \E\!\left[
  \Re\!\left\{\ebf_k^H \Pbf_k^\perp \Pibf_k^{\perp} \abf_i' d_k\right\}
  \,\Re\!\left\{\ebf_k^H \Pbf_k^\perp \Pibf_k^{\perp} \abf_j' d_k\right\}
\right]
\end{aligned}
\end{equation}
and 
\begin{equation}
\label{eq:T2ij}
\begin{split}
T_{2,k}(i,j) = \frac{4}{|a_k|^2}\,\E\!\biggl[
\Re\!\left\{\!\left(\Nbf_I\fbf_I+\Delta\abf\,d_k\right)^{\!H}
  \Pibf_k^{\perp}\Pbf_k^{\perp}\abf_i' d_k\right\}
\biggr.\\[-0.3ex]
\biggl.\cdot\Re\!\left\{\!\left(\Nbf_I\fbf_I+\Delta\abf\,d_k\right)^{\!H}
  \Pibf_k^{\perp}\Pbf_k^{\perp}\abf_j' d_k\right\}
\biggr]
\end{split}
\end{equation}
where $\Delta \abf = \nbf_k\odot \vbf(\omega_k)$. 
Next, using the formula
$$
2\,\Re\{u\}\Re\{v\} = \Re\{u v\} + \Re\{u v^H\}
$$
for two complex scalars $u$ and $v$
and that $\ebf_k$ is circularly symmetric, (\ref{eq:T1ij}) becomes
\begin{align}
 T_{1,k}(i,j) &= 2\,|d_k|^2 \E 
\left[ \Re \left\{ \abf_i'^H\Pibf_k^{\perp} \Pbf_k^{\perp} \ebf_k^{}\ebf_k^H  
\Pbf_k^{\perp}\Pibf_k^{\perp} \abf'_j \right\} \right] 
\nonumber \\
&= 2\,|d_k|^2\sigma_e^2\, \Re \left\{ \abf_i'^H \Tilde{\Pbf}_k\abf'_j \right\}\,,
\label{eq:T1ijfinal}
\end{align}
where $\Tilde{\Pbf}_k$ is defined in (\ref{eq:Ptilde}).

Next, to evaluate $T_{2,k}(i,j)$, we recall that $\Nbf_{I,k}=[\nbf_k\, ,\, \Nbf_k]$. Thus, we can rewrite the noise contribution as
\begin{align*}
\Nbf_{I,k}\fbf_k + \Delta\abf\,d_k &= \Nbf_k\cbf_k + \nbf_k b_k + \nbf_k \odot \vbf(\omega_k) \,d_k\\
&= 
\Zbf_k \text{vec}(\Nbf_{I,k})
\, ,
\end{align*}
where $\Zbf_k$ is defined in (\ref{eq:Zdef}).
The matrix $\Nbf_{I,k}$ is a Toeplitz matrix of the RC noise samples. Thus, we can write
$$
\text{vec}(\Nbf_{I,k}) = \Jbf_k \nbf_{I,k}\, ,
$$
where $\nbf_{I,k} = [n(-L),n(-L+1,\dots,n(N-1)]^T$ is the vector of all noise samples, and where $\Jbf_k$ is a selection matrix that picks out the respective component from $\nbf_{I,k}$. This leads to
\begin{equation}
\Nbf_I\fbf_I + \Delta\abf\,d_k = \Zbf_k\Jbf_k\nbf_{I,k} \, .
\label{eq:T2noise}
\end{equation}
Inserting (\ref{eq:T2noise}) into (\ref{eq:T2ij}) now leads to
\begin{equation}
 T_{2,k}(i,j) = 2\,|d_k|^2\frac{\sigma_n^2}{|a_k|^2}\, \Re \left\{ \abf_i'^H\Tilde{\Pbf}_k\Zbf_k^{}\Jbf_k^{}\Jbf_k^{T}\Zbf_k^{H}\Tilde{\Pbf}_k^{} \abf'_j \right\}\,.
\label{eq:T2ijfinal}
\end{equation}
Collecting the partial derivatives $\abf'_i$ into the matrix \eqref{eq:Ddef}
and inserting (\ref{eq:T1ijfinal}) and (\ref{eq:T2ijfinal}) into (\ref{eq:Qdef}), we can express the asymptotic covariance matrix of the gradient as
\begin{equation}
\begin{split}
\Wbf_k &= 2\,|d_k|^2\sigma_e^2\, \Re \left\{ \Dbf^H \Tilde{\Pbf}_k\Dbf\right\}  \\ &+
2\,|d_k|^2\frac{\sigma_n^2}{|a_k|^2}\, \Re \left\{ \Dbf^H\Tilde{\Pbf}_k\Zbf_k^{}\Jbf_k^{}\Jbf_k^{T}\Zbf_k^{H}\Tilde{\Pbf}_k^{} \Dbf
\right\}\,.
\label{eq:Qfinal}
\end{split}
\end{equation} 
The two terms represent the contributions from the noise in the SC and in the RC respectively.
Finally, adding all terms as in (\ref{eq:Vprim}), we arrive at
\begin{equation}
\Wbf = \E[V'(\thetabf_0)V'^T(\thetabf_0)] =
\sum_{k=1}^K \Wbf_k\, .
\label{eq:Vprimcov}
\end{equation}

\section{Asymptotic Hessian Matrix}
\label{Appendix2}

For the asymptotic Hessian matrix we set \( \sigma_n ^2 = \sigma_e^2 = 0 \) and differentiate the projection matrix in \eqref{eq:ML4} accordingly. We set \( \bbf = \Pibf^\perp \abf\) and follow the calculations outlined in Appendix B of \cite{viberg_ottersten_1991}. As above, we use the notation \(  \partial_i \bbf/\partial\theta_i = \bbf_i ' \), and further \(  \partial^2 \bbf/\partial\theta_i\partial\theta_j = \bbf_{ij} '' \) . We have already seen in
\eqref{eq:Pkprim} that 
\begin{equation}
\Pbf_i ' = \bbf_i ' \bbf^{+} + \bbf (\bbf^{+}_i) ' = \Pbf^\perp  \bbf_i ' \bbf^{+} + (\Pbf^\perp \bbf_i ' \bbf^{+})^{H} ,
\label{eq:Pprim_rewr}
\end{equation}
while the second derivatives are obtained as in \cite{viberg_ottersten_1991}
\begin{equation} \begin{split}
\Pbf_{ij} ''  = & - \Pbf^\perp \bbf_j ' \bbf^{+} \bbf_i ' \bbf^{+} - (\bbf^{+})^{H} (\bbf_j ')^H \Pbf^\perp \bbf_i ' \bbf^{+} + \Pbf^\perp \bbf_{ij} '' \bbf^{+} \\ &  + \Pbf^\perp \bbf_i ' (\bbf_j ' )^{H}\Pbf^\perp / \|\bbf\|^2 - \Pbf^\perp \bbf_i ' \bbf^{+} \bbf_i ' \bbf^{+} + (\dots)^H .
\end{split}
\label{eq:Psec_rewr}
\end{equation}
where \( (\dots)^H \) means the Hermitian transpose of all the previous terms. Now, 
computing 
\( \ybf_k^{H} \Pbf_{ij} '' \ybf_k  \)
we observe that \(\bbf_i ' \) still belongs to \( \text{span}(\Pibf^\perp) \), because \(\Pibf^\perp \) does not depend on the parameters \( \thetabf \). Moreover, we recall that \( \ybf_k = \gbf_k + d_k \abf_k \) for some \( \gbf_k \in \text{span}(\Sbf_I) \); thus \( \ybf_k ^H \Pbf^\perp = \ybf_k ^H(\Ibf - \Pbf) = \ybf_k ^H - (\Pbf \ybf_k)^H = \gbf_k ^H \) and by \( \gbf_k \perp \bbf_i ' \), we can conclude that all the factors in \eqref{eq:Psec_rewr} beginning with \( \Pbf^\perp \bbf_i '  \) vanish when left-multiplied by \( \ybf_k ^H \). 
Therefore, using \(\bbf^+ = \bbf^H / \|\bbf\|^2 \) and \( \bbf^H \abf = \|\bbf\|^2 \), we obtain \begin{equation}
\ybf_k ^H \Pbf_{ij} '' \ybf_k = -2 |d_k|^2 \Re \left\{ (\bbf_j ')^H \Pbf^\perp  \bbf_i ' \right\}\,.
\label{eq:Vsec_rewr}
\end{equation}
Note that (\ref{eq:Vsec_rewr}) equals (\ref{eq:T1ijfinal}) up to a multiplicative factor as expected. The above can be put into matrix form as
\begin{equation}
\Hbf_k = -2\,|d_k|^2\, \Re \left\{ \Dbf^H _k \Tilde{\Pbf} _k^{} \Dbf _k^{} \right\}\, ,
\label{eq:Hfinal}
\end{equation}
which proves \eqref{eq:CRB_theta}.

\section{Consistency discussion}
\label{Appendix3}
Before proving Theorem \ref{thm:rf_unambiguous}, we need to recall a technical result from \cite{azais_wschebor_2009}:

\begin{proposition}[6.11 in \cite{azais_wschebor_2009}] \label{prop:azaiswschebor}
Let \(\mathcal{Y} = \{ Y(t) \, : \, t \in W  \}  \) be a random field with values in \( \mathbb{R}^{m + k} \) and \(W \subseteq \mathbb{R}^d \) open, \( k \ge 1\) . Let \( u \in \mathbb{R}^{m + k} \) and \(I\) a subset of \(W\). We assume that \( \mathcal{Y} \) satisfies the following conditions:

\begin{itemize}
    \item the paths \( t \rightsquigarrow Y(t) \) are of class \( \mathcal{C}^1 \),
    \item for each \(t \in W \), the random vector \( Y(t) \) has a density, and there exists a constant \( C \) such that \( p_{Y(t)}(x) \le C \) for \( t \in I \) and \(x\) is some neighborhood of \(u\),
    \item the Hausdorff dimension of \(I\) is smaller or equal than \(m\).
\end{itemize}
Then, almost surely, there is no point \( t \in I \) such that \( Y(t) = u \).

\end{proposition}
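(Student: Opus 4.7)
The plan is to recognize the rank-one event as the zero-set event of an auxiliary random field and invoke Proposition \ref{prop:azaiswschebor}. Since $\vbf$ is a proper complex Gaussian vector with strictly positive definite covariance (its $N$ components are samples of $s$ at $N$ distinct times and $\mathcal{S}_s \equiv 1$ on $[-B/2,B/2]$ has positive Lebesgue measure), one has $\vbf \neq 0$ almost surely. Thus, up to a null event, the event in \eqref{eq:prob_cond} is equivalent to the existence of $(\tau,\omega,\alpha) \in T_\tau \times \mathbb{R} \times \mathbb{C}$ such that
\begin{equation*}
\Phi(\tau,\omega,\alpha) \ := \ \wbf(\tau,\omega) \ - \ \alpha\,\vbf \ = \ 0 \in \mathbb{C}^N \cong \mathbb{R}^{2N}.
\end{equation*}
I would view $\Phi$ as a random field on the open set $W = \mathbb{R}^4$ with coordinates $(\tau,\omega,\Re\alpha,\Im\alpha)$ and set the target value $u = 0$; the task reduces to checking the hypotheses of the proposition.

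For the $C^1$ hypothesis, since $\mathcal{S}_s$ has compact support all spectral moments of $s$ are finite, so $s$ admits a modification with real-analytic sample paths, and consequently $\Phi$ is smooth in $(\tau,\omega,\alpha)$ almost surely. For the density hypothesis, I would write $\Phi = M(\omega,\alpha)\,Z$ as a linear image of the $2N$-dimensional proper complex Gaussian vector
\begin{equation*}
Z = \bigl(s(t_1-\tau),\,s(t_1-\tau_0),\,\ldots,\,s(t_N-\tau),\,s(t_N-\tau_0)\bigr)^T ,
\end{equation*}
where $M$ is block-diagonal with $n$-th block $(e^{i\omega t_n},\,-\alpha e^{i\omega_0 t_n})$. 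The covariance $K(\tau)$ of $Z$ is strictly positive definite whenever the $2N$ shifted sampling instants are pairwise distinct, which excludes only the finitely many $\tau$-values of the form $\tau_0 + (t_n - t_{n'})$. On any compact subset of $T_\tau$ disjoint from those bad values and from $\tau_0$, $\lambda_{\min}(K(\tau))$ is bounded below by a positive constant. Each row of $M$ contains the unit-modulus entry $e^{i\omega t_n}$, so $M$ has full row rank $N$, the matrix $M K M^H$ is invertible, and $|\det(MKM^H)|$ is uniformly bounded below on any compact set in $(\tau,\omega,\alpha)$; the proper complex Gaussian density of $\Phi$ is therefore uniformly bounded above on a neighborhood of $u = 0$ on such compacts.

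To conclude, I would exhaust $T_\tau \times \mathbb{R} \times \mathbb{C}$ by a countable family $\{I_j\}$ of compact sets of the form (a finite union of closed subintervals of $T_\tau$ avoiding $\tau_0$ and the finite bad set) $\times [-R,R] \times \{|\alpha|\leq R\}$ for $R \in \mathbb{N}$. Each $I_j \subset \mathbb{R}^4$ has Hausdorff dimension $4$, and Proposition \ref{prop:azaiswschebor} applies with $m = 4$ and $m+k = 2N$, giving $k = 2N - 4 \geq 2$ precisely because $N \geq 3$; hence almost surely no element of $I_j$ solves $\Phi = 0$, and the countable union remains null. The excluded finite collection of $\tau$-values (where $K$ degenerates) is absorbed by fixing such a $\tau$ and running the analogous three-parameter argument over $(\omega,\Re\alpha,\Im\alpha)$, which requires only $2N > 3$.

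The main obstacle is the uniform lower bound on $|\det(MKM^H)|$ on \emph{unbounded} portions of the parameter space. Here the cross-terms in the covariance of $\Phi$ scale like $|\alpha|^2$, so $|\det(MKM^H)|$ grows like $|\alpha|^{2N}$ as $|\alpha|\to\infty$ and the density at $0$ actually decays; this is what makes the compactification in $\alpha$ (and similarly in $\omega$, where the unit-modulus phases only rotate the covariance) innocuous. The separation from $\tau_0$ in the hypothesis on $T_\tau$ is essential rather than cosmetic: as $\tau \to \tau_0$ the pair $(s(\cdot - \tau),s(\cdot-\tau_0))$ becomes linearly dependent and $K(\tau)$ loses rank, which reflects the fact that the rank-one event genuinely occurs at $\tau = \tau_0$ (with $\omega = \omega_0$) and must therefore be excluded by hypothesis for the conclusion to hold.
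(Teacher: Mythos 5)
The statement you were asked to prove is Proposition \ref{prop:azaiswschebor} itself, i.e., the Azaïs--Wschebor result that a $\mathcal{C}^1$ random field with uniformly bounded density near $u$ almost surely misses the level $u$ on any index set of Hausdorff dimension at most $m$. Your argument does not prove this: its opening move is to \emph{invoke} Proposition \ref{prop:azaiswschebor}, so as a proof of that proposition it is circular. (The paper offers no proof of it either; it is quoted verbatim from \cite{azais_wschebor_2009} and used as a black box.) A genuine proof would have to run the covering/first-moment argument in the spirit of Bulinskaya's lemma --- cover $I$ by $O(\varepsilon^{-m})$ balls, use the $\mathcal{C}^1$ paths to confine the image of each ball to an $O(\varepsilon)$-neighborhood of a single value, bound the hitting probability of each by $C\varepsilon^{m+k}$ via the density bound, and let $\varepsilon\to 0$ using $k\ge 1$ --- and none of that appears in your write-up.

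What you have actually written is a proof of Theorem \ref{thm:rf_unambiguous}, and as such it follows essentially the paper's own route in Appendix C: recast the rank-one event as the zero set of the auxiliary field $(\tau,\omega,\alpha)\mapsto \wbf(\tau,\omega)-\alpha\vbf$ on $\mathbb{R}^4$, factor it as a deterministic full-row-rank matrix acting on the Gaussian vector of shifted samples, establish nonsingularity of the latter's covariance away from the finitely many collision values $\tau=\tau_0+(t_p-t_q)$ via the spectral representation, and close with the dimension count $m=4$, $k=2N-4\ge 2$. The differences are minor: you work in complex coordinates and exhaust the unbounded $(\omega,\alpha)$-range by countably many compacts, whereas the paper obtains a single uniform eigenvalue bound from $\Bbf(\omega,\alpha)\Bbf(\omega,\alpha)^T=(1+|\alpha|^2)\Ibf$ and the Poincaré separation theorem, which is cleaner and avoids the countable-union step. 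Your closing claim that the excluded collision values of $\tau$ can be recovered by a three-parameter argument at fixed $\tau$ is asserted without checking that the restricted field still has a bounded density there (the covariance of the sample vector is singular at those $\tau$); the paper simply excludes those finitely many points from $T$, which suffices for the theorem as stated.
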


\begin{proof}[Proof of Theorem \ref{thm:rf_unambiguous}]
We may assume without loss of generality that \( \tau_0 = \omega_0 = 0 \), and the excluded set then becomes \( T_1 = \{0\} \cup \bigcup_{p \ne q} \{t_p - t_q\} \).

We define \( T \coloneqq T_3 \times \mathbb{R} \) with \( T_3 \subseteq \mathbb{R} \setminus T_1 \) a compact set. Since \( \abf(0,0) \ne \mathbf{0} \) w.p.\ \(1\), the event whose probability shall be assessed is equivalent to \( \{\exists \, (\tau, \omega, \alpha) \in T \times \mathbb{C} \, : \, \abf(\tau,\omega) + \alpha\, \abf(0,0) = 0\} \). Consider then the Gaussian random field \( \hbf : W \to \mathbb{R}^{2N} \) on the open set \( W \coloneqq (\mathbb{R}\setminus T_1) \times \mathbb{R}^3 \), whose \(2N\) components are the real and imaginary parts of the entries of \( \abf(\tau,\omega) + \alpha\, \abf(0,0) \), parametrized by \( (\tau, \omega, \alpha_1 , \alpha_2) \), under the identification \( \alpha = \alpha_1 + j \alpha_2 \cong (\alpha_1, \alpha_2) \). If the hypotheses of Proposition \ref{prop:azaiswschebor} for \( \hbf \) on \( W \), with \( u = \mathbf{0} \) and \( I = T \times \mathbb{R}^2 \subset W \) are verified, then the level set \( \{ \hbf = \mathbf{0} \} \) is almost surely empty over \( I \), which is our thesis.

We factor \( \hbf(\tau, \omega, \alpha) = \Bbf(\omega, \alpha)\, \xbf(\tau) \), where \( \xbf(\tau) \in \mathbb{R}^{4N} \) collects the real and imaginary parts of \( s(t_1-\tau),\dots,s(t_N-\tau),\, s(t_1),\dots,s(t_N) \), and \( \Bbf(\omega, \alpha) = \begin{pmatrix} \Rbf (\omega) & \Abf(\alpha) \end{pmatrix} \) is the \( 2N \times 4N \) real matrix whose blocks \( \Rbf(\omega) \) and \( \Abf(\alpha) \) are \( 2N \times 2N \) block-diagonal with \( 2\times 2 \) diagonal blocks \( \left(\begin{smallmatrix} \cos\omega t_n & -\sin\omega t_n \\ \sin\omega t_n & \cos\omega t_n \end{smallmatrix}\right) \) and \( \left(\begin{smallmatrix} \alpha_1 & -\alpha_2 \\ \alpha_2 & \alpha_1 \end{smallmatrix}\right) \) respectively, \( n=1,\dots,N \). As a consequence of Theorem 11 in \cite{belyaev_1960}, the sample paths of \( \xbf \) are of class \( \mathcal{C}^\infty \) w.p.\(1\) in \( \tau \), and so are the sample paths of \(\hbf \) in \( (\tau, \omega, \alpha) \).

Continuing further: a direct computation gives \( \Bbf\Bbf^T = (1 + |\alpha|^2)\, \Ibf_{2N} \), so from \( \boldsymbol{\Sigma}_{\hbf} = \Bbf\, \boldsymbol{\Sigma}_{\xbf}\, \Bbf^T \) we obtain, for every \( \cbf \in \mathbb{R}^{2N} \),
\begin{equation} \label{eq:Poinc} \begin{split}
\cbf^T \boldsymbol{\Sigma}_{\hbf}(\tau,\omega,\alpha)\, \cbf & = (\Bbf^T \cbf)^T \boldsymbol{\Sigma}_{\xbf}(\tau)\, \Bbf^T \cbf \\ & \ge (1 + |\alpha|^2)\, \lambda_{\min}\big(\boldsymbol{\Sigma}_{\xbf}(\tau)\big) \|\cbf\|_2^2, \end{split}
\end{equation}
hence \( \lambda_{\min}(\boldsymbol{\Sigma}_{\hbf}(\tau,\omega,\alpha)) \ge \lambda_{\min}(\boldsymbol{\Sigma}_{\xbf}(\tau)) \) for all \( (\omega, \alpha) \).

It remains to show \( \boldsymbol{\Sigma}_{\xbf}(\tau) \) is non-singular for every \( \tau \in \mathbb{R}\setminus T_1 \). With the auxiliary vector \( \zbf(\tau) = \big(s(t_1-\tau),\dots,s(t_N-\tau),\, s(t_1),\dots,s(t_N)\big)^T \in \mathbb{C}^{2N} \) and \( \boldsymbol{\Sigma}_{\zbf}(\tau) = \mathbb{E}[\zbf(\tau)\zbf(\tau)^{\mathrm H}] \), the vector \( \xbf(\tau) \) equals \( \big(\mathfrak{Re}(\zbf)^T, \mathfrak{Im}(\zbf)^T\big)^T \) up to a permutation; since \( s \) is circularly symmetric, \( \boldsymbol{\Sigma}_{\xbf}(\tau) \) is non-singular iff \( \boldsymbol{\Sigma}_{\zbf}(\tau) \) is. The latter, being a covariance matrix, is positive semi-definite, so it suffices to show \( \cbf^{\mathrm H} \boldsymbol{\Sigma}_{\zbf}(\tau)\, \cbf = 0 \Rightarrow \cbf = \mathbf{0} \). By the Wiener–Khinchine theorem, \( \mathbb{E}[\zbf(\tau)_n \overline{\zbf(\tau)_\ell}] = \int_\mathbb{R} e^{j f (u_n - u_\ell)} \mathcal{S}_s(f)\, df \) with \( u_n \in \{t_1 - \tau, \dots, t_N - \tau, t_1, \dots, t_N\} \), so
\begin{equation} \label{eq:WienerKhinchine} \begin{split}
\cbf^{\mathrm H} \boldsymbol{\Sigma}_{\zbf}(\tau)\, \cbf & = \int_\mathbb{R} \Big( \sum_n \overline{c_n} e^{j f u_n} \Big) \Big( \sum_\ell c_\ell e^{-j f u_\ell} \Big) \mathcal{S}_s(f)\, df \\ & = \int_{-B/2}^{B/2} \Big| \sum_{n=1}^{2N} c_n e^{-j f u_n} \Big|^2 df , \end{split}
\end{equation}
the last equality using \( \mathcal{S}_s = \mathbf{1}_{[-B/2,B/2]} \). For \( \tau \notin T_1 \) the \( u_n \) are pairwise distinct - the only possible collisions \( t_p - \tau = t_q \) occur for \( \tau \in T_1 \) - so the exponentials \( \{e^{-jfu_n}\}_{n=1}^{2N} \) are linearly independent over \( \mathbb{C} \); for \( \cbf \ne \mathbf{0} \) the entire function \( f \mapsto \sum_n c_n e^{-jfu_n} \) is not identically zero and vanishes at finitely many points of \( [-B/2,B/2] \) only, making \eqref{eq:WienerKhinchine} strictly positive. In particular, by \eqref{eq:Poinc}, \( \hbf \) has a density at every point of \( W \).

Finally, \( \tau \mapsto \lambda_{\min}(\boldsymbol{\Sigma}_{\xbf}(\tau)) \) is continuous and strictly positive on \( \mathbb{R}\setminus T_1 \), hence attains a minimum \( c_{T_3} > 0 \) on the compact set \( T_3 \); with \eqref{eq:Poinc} this gives \( \lambda_{\min}(\boldsymbol{\Sigma}_{\hbf}) \ge c_{T_3} \) on \( I \), so that
\[
p_{\hbf(\tau,\omega,\alpha)}(x) \le (2\pi)^{-N} \det(\boldsymbol{\Sigma}_{\hbf})^{-1/2} \le (2\pi)^{-N} c_{T_3}^{-N}
\]
for all \( x \in \mathbb{R}^{2N} \) and \( (\tau,\omega,\alpha) \in I \), fulfilling the second hypothesis of Proposition \ref{prop:azaiswschebor}. Since \( \dim_H(I) \le \dim_H(T_3) + 3 \le 4 \), the proposition applies with \( m = 4 \) and \( k = 2N - 4 \ge 2 \) and \( \{\hbf = \mathbf{0}\} \) is almost surely empty on \( I \).
\end{proof}

\bibliographystyle{ieeetr}
\bibliography{DistributedRadarRefs}

\end{document}